\newenvironment{wideenumerate}{\enumerate\addtolength{\itemsep}{5pt}}{\endenumerate}
\renewcommand{\algocf@captiontext}[2]{#1\algocf@typo. \AlCapFnt{}#2} 
\def\@algocf@capt@plain{top}
\renewcommand{\algocf@makecaption}[2]{%
  \addtolength{\hsize}{\algomargin}%
  \sbox\@tempboxa{\algocf@captiontext{#1}{#2}}%
  \ifdim\wd\@tempboxa >\hsize
   \hskip .5\algomargin%
    \parbox[t]{\hsize}{\algocf@captiontext{#1}{#2}}
 \else%
   \global\@minipagefalse%
   \hbox to\hsize{\box\@tempboxa}
 \fi%
  \addtolength{\hsize}{-\algomargin}%
}
\newcommand{\ignore}[1]{}
\newcommand{\dt}{\text{d}}
\newtheorem{assumption}{Assumption}
\newtheorem{remark}{Remark}
\newtheorem{theorem}{Theorem}
\newtheorem{corollary}{Corollary}
\newtheorem{lemma}{Lemma}
\begin{document}

\title{Bayesian inference using synthetic likelihood:  asymptotics and adjustments}
\date{\empty}

\author[1,6]{David T. Frazier}
\author[2,3]{David J. Nott\thanks{Corresponding author:  standj@nus.edu.sg}}
\author[4,6]{Christopher Drovandi}
\author[5,6]{Robert Kohn}
\affil[1]{Department of Econometrics and Business Statistics, Monash University, Clayton VIC 3800, Australia}
\affil[2]{Department of Statistics and Applied Probability, National University of Singapore, Singapore 117546}
\affil[3]{Operations Research and Analytics Cluster, National University of Singapore, Singapore 119077}
\affil[4]{School of Mathematical Sciences, Queensland University of Technology, Brisbane 4000 Australia}
\affil[5]{Australian School of Business, School of Economics, University of New South Wales, Sydney NSW 2052, Australia}
\affil[6]{Australian Centre of Excellence for Mathematical and Statistical Frontiers (ACEMS)}

\maketitle

\vspace{-0.3in}

\begin{abstract}
Implementing Bayesian inference is often computationally challenging in applications involving complex models, and
sometimes calculating the likelihood itself is difficult.  Synthetic likelihood is one
approach for carrying out inference when the likelihood
is intractable, but it is straightforward to simulate from the model.  The method
constructs an approximate likelihood by taking a vector
summary statistic as being multivariate normal, with the unknown mean and covariance matrix estimated by simulation
for any given parameter value. Previous empirical research demonstrates that the Bayesian implementation of synthetic likelihood can be more computationally efficient than approximate Bayesian computation, a popular likelihood-free method, in the presence of a high-dimensional summary statistic.  Our article makes three contributions.  The first shows that if the summary statistic satisfies
a central limit theorem, then the synthetic likelihood posterior is asymptotically normal and yields credible sets with the correct level of frequentist coverage.  This result is similar to that obtained by approximate Bayesian computation.  The second contribution
compares the computational efficiency of Bayesian synthetic likelihood and approximate Bayesian computation using
the acceptance probability for rejection and importance sampling algorithms with a ``good'' proposal distribution.
We show that Bayesian synthetic likelihood is computationally more efficient than approximate Bayesian computation, and behaves similarly to
regression-adjusted approximate Bayesian computation.  Based on the
asymptotic results, the third contribution proposes using adjusted inference methods when a possibly misspecified form is assumed
for the covariance matrix of the synthetic likelihood, such as diagonal or a factor model, to speed up the computation.  The methodology is illustrated with some simulated and real examples.

\smallskip
\noindent \textbf{Keywords.} Approximate Bayesian computation; likelihood-free inference; model misspecification.
\end{abstract}

\section{Introduction}\label{sec:Intro}

Synthetic likelihood is a popular method used in likelihood-free inference when the likelihood is intractable, but it is possible to simulate
from the model for any given parameter value.  The method takes a vector summary statistic that is assumed to be informative about the parameter
and assumes it is multivariate normal, estimating the unknown mean and covariance matrix by simulation to produce an approximate likelihood function.  \citet{price+dln16} provide empirical and preliminary theoretical evidence that Bayesian synthetic likelihood (BSL) can perform favourably compared to approximate Bayesian computation \citep[ABC,][]{sisson2018}, a more mature likelihood-free method that has been subjected to extensive theoretical examination.  The performance gains of BSL are particularly noticeable in the presence of a regular, high-dimensional summary statistic. Given the promising empirical performance of BSL, it is important to study its theoretical properties.

This article makes three contributions.  First, it
investigates the asymptotic properties of synthetic likelihood when the summary statistic satisfies a central limit theorem.  The conditions required for the results are similar to those in \citet{frazier+mrr18} in the asymptotic analysis of ABC algorithms, but with an additional assumption controlling the uniform behaviour of
summary statistic covariance matrices.  Under appropriate conditions,
the posterior density is asymptotically normal and it quantifies uncertainty accurately,
similarly to ABC approaches \citep{li+f18b,li+f18a,frazier+mrr18}.

The second contribution is to show that a rejection sampling
BSL algorithm has a non-negligible acceptance probability for a ``good'' proposal
density.  A similar ABC algorithm
has an acceptance probability that goes to zero asymptotically,
and synthetic likelihood performs similarly to regression-adjusted ABC \citep{li+f18b,li+f18a}.

The third contribution considers
situations where a parsimonious but misspecified form is assumed
for the covariance matrix of the summary statistic, such as a diagonal matrix or a factor model,
to speed up the computation.
For example, \citet{priddle+sfd19} show that for a diagonal covariance matrix, the number of simulations need only grow linearly with the summary statistic dimension
to control the variance of the synthetic likelihood estimator, as opposed to quadratically for the full covariance matrix.
This is especially important for models
where simulation of summary statistics is expensive.  We use our asymptotic results to motivate
sandwich-type variance adjustments to account for the misspecification and implement these in some examples.
The adjustments just discussed are also potentially useful when the model for the original data
is misspecified and  we wish to carry out inference
for the pseudo-true parameter value with the
data generating density closest to the truth; Section 2.1  elaborates on these ideas.

For the adjustment methods to be valid, it is important that
the summary statistic satisfies a central limit theorem, so that we can make use of the asymptotic normality of the posterior density.  This means that these
adjustments are not useful for correcting for the effects of violating
the normality assumption for the summary statistic.  \cite{muller13} considers some related methods,
although not in the context of synthetic likelihood or likelihood-free inference.  \citet{frazier+rr17}
studies the
consequences of misspecification for ABC approaches to likelihood-free
inference.

\citet{wood10} introduced the synthetic likelihood and used it for approximate
(non-Bayesian) inference.  \citet{price+dln16} discussed
Bayesian implementations focusing on efficient computational methods.  They also show that the synthetic likelihood
scales more easily to high-dimensional problems and that it is easier to tune than competing approaches such as ABC.

There is much recent development of innovative methodology for
accelerating computations for synthetic likelihood and related methods \citep{meeds+w14,wilkinson14,gutmann+c15,everitt17,ong+ntsd16,ong+ntsd18,An2016,priddle+sfd19}.  However,
there is also interest in weakening the normality assumption on which
the synthetic likelihood is based.  This  led several authors to use other surrogate likelihoods for more flexible summaries.  
For example, \citet{Fasiolo2016} consider extended saddlepoint approximations,
\citet{Dutta2016} consider a logistic regression approach for likelihood estimation, and \cite{an+nd18} consider semiparametric
density estimation with flexible marginals and a Gaussian copula dependence structure.
\citet{mengersen+pr13} and \citet{chaudhuri+gnp18} consider empirical likelihood approaches.
An encompassing framework for many of these suggestions is the parametric
Bayesian indirect likelihood of \citet{drovandi+pl15}.

As mentioned above,
the adjustments for misspecification developed here do not
contribute to this literature on robustifying synthetic likelihood inferences to non-normality
of the summary statistics, as they can only be justified when a central limit
theorem holds for the summary statistic.  Bayesian analyses involving pseudo-likelihoods have been considered
in the framework of Laplace-type estimators discussed in \cite{chernozhukov+h03},
but their work does not deal with settings where
the likelihood itself must be estimated using Monte Carlo.  \cite{forneron+n18}
developed some theory connecting
ABC approaches with simulated minimum distance methods widely used in econometrics, and their discussion is also relevant to simulation versions of Laplace-type estimators.

\section{Bayesian synthetic likelihood}\label{asymptotic}
Let $y=(y_1,\dots,y_n)^{\intercal}$ denote the observed data and define $P^{(n)}_0$ as the true distribution generating $y$. The model $P^{(n)}_0$ is approximated using a parametric family of models $\{P^{(n)}_\theta:\theta\in\Theta\subset\mathbb{R}^{d_\theta}\}$, and $\Pi$ denotes the prior distribution over $\Theta$, with density $\pi(\theta)$.
 We are interested in situations where, due to the complicated nature of the model, the likelihood of $P^{(n)}_\theta$ is intractable. In such cases, approximate methods such as BSL can be used to conduct inference on the unknown $\theta$.

Like the ABC method, 
BSL is most commonly implemented by replacing the observed data $y$ by a low-dimensional vector of 
summary statistics. Throughout, we let the function $S_n:\mathbb{R}^n\rightarrow\mathbb{R}^d$, 
$d\ge d_\theta$, represents the chosen vector (function) of summary statistics. For a given model 
$P^{(n)}_\theta$, let $z=(z_1,\dots,z_n)^{\intercal}$ denote data generated 
under the model $P^{(n)}_\theta$, and let $b_{}(\theta):=\mathbb{E}\{S_n(z)|\theta\}$ and
$\Sigma_n(\theta):=\text{var}\{S_n(z)|\theta\}$ denote the mean and variance of the summaries calculated under $P^{(n)}_\theta$; the map $\theta\mapsto b(\theta)$ may technically depend on $n$. However, if the data are independent and identically distributed or weakly dependent, and if $S_n$ can be written as an average, $b(\theta)$ will not meaningfully depend on $n$.  As the vast majority of summaries used in BSL satisfy this scenario, neglecting the potential dependence on $n$ is reasonable.

The synthetic likelihood method approximates the intractable likelihood of $S_n(z)$ by a normal likelihood. If $b_{}(\theta)$ and $\Sigma_n(\theta)$ are known, then the synthetic likelihood
is
\begin{align*}
g_n(S_n|\theta) & := N\left\{S_n;b(\theta),\Sigma_n(\theta)\right\};
\end{align*}
here, and below,   $N(\mu,\Sigma)$ denotes a  normal distribution with mean $\mu$ and
covariance matrix $\Sigma$, and $N(x;\mu,\Sigma)$ is its density function evaluated at $x$.

The idealized BSL posterior using known $b(\theta)$ and $\Sigma_n(\theta)$ is
$$\pi(\theta|S_n)=\frac{g_n(S_n|\theta)\pi(\theta)}{\int_\Theta g_n(S_n|\theta)\pi(\theta)\dt \theta};$$
Markov chain Monte Carlo (MCMC) is used to obtain draws from the target posterior $\pi(\theta|S_n)$, which we assume exists for all $n$. However, outside of toy examples, posterior inference based on $\pi(\theta|S_n)$ is infeasible since $b(\theta)$ and $\Sigma_n(\theta)$ can only be analytically calculated if the mean and variance of $S_n(z)$ is known.

Therefore, BSL is generally implemented by replacing $b(\theta)$ and $\Sigma_n(\theta)$ with estimates $\widehat{b}_n(\theta)$ and
$\widehat{\Sigma}_n(\theta)$.  To obtain these estimates, we generate $m$ independent summary statistics
$\{S(z^i)\}_{i=1}^{m}$, where $z^i\sim P^{(n)}_\theta$, and take $\widehat{b}_n(\theta)$ as the sample mean of the $S_n(z^i)$ and
$\widehat{\Sigma}_n(\theta)$ as their sample covariance matrix.
The notation does not show the dependence of $\widehat{b}_n(\theta)$ and $\widehat{\Sigma}_n(\theta)$ on $m$, since $m$ is later taken as a function of $n$. In practical applications of BSL, the use of variance estimates other than $\widehat{\Sigma}_n(\theta)$ is common (e.g.\ \citealp{An2016}, \citealp{ong+ntsd18} and \citealp{priddle+sfd19}). To encapsulate these and other situations, we take ${\Delta}_n(\theta)$ to be a general covariance matrix estimator.

When $b(\theta)$ and $\Sigma_n(\theta)$ are replaced with estimates, BSL attempts to sample the following posterior target
\begin{align}\label{eq:perturbedpost}
\widehat{\pi}(\theta|S_n)& \propto \pi(\theta) \widehat{g}_n(S_n|\theta),
\end{align}where, for $q_n(\cdot|\theta)$ the density of the simulated summary statistics under $P^{(n)}_\theta$,
\begin{align}
\widehat{g}_n(S_n|\theta) & := \int N\{S_n;\widehat{b}_n(\theta),{\Delta}_n(\theta)\} \prod_{i=1}^m q_n\{S(z^i)|\theta\}\,\dt S(z^1)\,\dots\, \dt S(z^m) \label{noisySL}.
\end{align}
Noting that an unbiased estimator of  $\widehat{g}_n(S_n|\theta)$ can be obtained by taking a single draw of $S_n(z^i)\sim q_n(\cdot|\theta)$, and following arguments in Andrieu and Roberts (2009), a pseudo-marginal algorithm employing an estimator of $\widehat{g}_n(S_n|\theta)$ results in sampling from the posterior density $\widehat{\pi}(\theta|S_n)$ in \eqref{eq:perturbedpost} under reasonable integrability assumptions. Therefore, estimation of $b(\theta)$ and $\Sigma_n(\theta)$ ensures that the BSL posterior target, $\widehat{\pi}(\theta|S_n)$, and the idealized BSL posterior,
$\pi(\theta|S_n)$, will differ. 

Under idealized, but useful assumptions, \cite{pitt+sgk12}, \cite{doucet+pdk15} and \cite{sherlock2015}
choose the number of samples $m$ in pseudo-marginal MCMC to optimize the time normalized variance of
the posterior mean estimators.
They show that a good choice of $m$ occurs (for a given $\theta$)
when the variance $\sigma^2(\theta)$  of the log of the likelihood estimator  lies between 1 and 3, with a value of 1 suitable for a very good  proposal, i.e.,  close to the posterior,  and around 3 for an inefficient proposal, e.g. a random walk. \cite{deligiannidis2018correlated}
propose a correlated pseudo-marginal sampler that tolerates
a much greater value of $\sigma^2(\theta)$, and hence a much smaller value of $m$,
when the random numbers used to construct the estimates of the likelihood at both the current and
proposed values of $\theta$ are correlated; see also \cite{tran2016block}
for an alternative construction of a correlated block pseudo-marginal sampler.

Here, the perturbed BSL target
is \eqref{noisySL} and the log of its estimate is,
\begin{equation}\label{eq:gest}
-\frac{1}{2}\log\left\{\left|\Delta_n(\theta)\right|\right\}-\frac{1}{2}\left\{S_n-\widehat{b}_n(\theta)\right\}^{\intercal}{\Delta}^{-1}_n(\theta)\left\{S_n-\widehat{b}_n(\theta)\right\},
\end{equation}
omitting additive terms  not depending on $\theta$. 
It is straightforward to incorporate either
the correlated or block pseudo-marginal approaches into the estimation and show that \eqref{eq:gest} is bounded in a neighbourhood of $\theta_0$ if the eigenvalues of $\Delta_n(\theta)$ are bounded away from zero, suggesting that the variance of the 
log of the estimate of the synthetic likelihood \eqref{eq:gest}
 will not have a high variance  in practice. We do not 
  not derive theory for how to select $m$ optimally because that requires taking account of the bias and variance of the synthetic likelihood, which is unavailable in general due to the intractability of the likelihood. However, our empirical
 work limits $\sigma^2(\theta)$ to lie between 1 and 3, which produces good results. \cite{price+dln16} find in their examples that the approximate posterior in \eqref{eq:perturbedpost} depends only weakly on
the choice of $m$, and hence they often  choose a small value of $m$  for faster computation.

The BSL posterior in \eqref{eq:perturbedpost} is constructed from three separate approximations: (1) the representation of the observed data $y$ by the summaries $S_n(y)$; (2) the approximation of the unknown distribution for the summaries by a Gaussian with unknown mean $b(\theta)$ and covariance $\Sigma_n(\theta)$; (3) the approximation of the unknown mean and covariance by the  estimates $\widehat{b}_n(\theta)$ and ${\Delta}_n(\theta)$. 

Given the various approximations involved in BSL, it is critical to understand precisely how these approximations impact the resulting inferences on the unknown parameters $\theta$. {In practice,  understanding how $m$ and ${\Delta}_n(\theta)$ affect the resulting inferences is particularly important. The larger $m$, the more time consuming is the computation of the BSL posterior. Replacing $\Sigma_n(\theta)$, the covariance of the summaries, by $\Delta_n(\theta)$ means that the posterior may not reliably quantify uncertainty (if $\Delta_n(\theta)$ is not carefully chosen).}
Any theoretical analysis of the BSL posterior is made difficult by the intractability  of $P^{(n)}_\theta$, and ensures that exploring the finite-sample behavior of the BSL likelihood estimate in \eqref{eq:gest}, and ultimately $\widehat{\pi}(\theta|S_n)$, is difficult in general problems. We therefore use asymptotic methods
to study the impact of the various approximation within BSL on the resulting inference for $\theta$.

\section{Asymptotic Behavior of BSL}
This section contains several results that disentangle the impact of the previously mentioned approximations used in BSL. These demonstrate that, under regularity conditions, BSL delivers inferences that are just as reliable as other approximate Bayesian methods, such as ABC. Moreover, unlike the commonly applied accept/reject ABC, the acceptance probability obtained by running BSL does not converge to zero as the sample size increases, and is not affected by the number of summaries {(assuming they are of fixed dimension, i.e., $d=\text{dim}(S_n)$ does not change as $n$ increases)}.

A Bernstein von-Mises result is first proved 
and is then used  to deduce asymptotic normality of the BSL posterior mean. Using these results, we can demonstrate that valid uncertainty quantification in BSL requires: (1) $m\rightarrow\infty$ as $n\rightarrow\infty$; (2) the chosen covariance matrix used in BSL, ${\Delta}_n(\theta)$, must be a consistent estimator for the asymptotic variance of the observed summaries $S_n(y)$.

Some notation is now defined to make the results below easier to state and follow.
For $x\in\mathbb{R}^{d}$, $\| x\| $ denotes the Euclidean norm of $x$. For any matrix $M\in\mathbb{R}^{d\times d}$, we define $|M|$ as the determinant of $M$, 
and, with some abuse of notation, 
 let $\|M\|$ denote any convenient matrix norm of $M$; the choice of $\|\cdot\|$ is immaterial since we will always be working with matrices of fixed $d\times d$ dimension, so that all matrix norms are equivalent. Let $\text{Int}(\Theta)$ denote the interior of the set $\Theta$. 
 Throughout, let  $C$
denote a generic positive constant that can change with each use. 
For real-valued sequences $\{a_{n}\}_{n\geq 1}$ and
$\{b_{n}\}_{n\geq 1}$: $a_{n}\lesssim b_{n}$ denotes $a_{n}\leq Cb_{n}$ for
some finite $C>0$ and all $n$ large, $a_{n}\asymp b_{n}$ implies $a_{n}\lesssim b_{n}$ and $b_n \lesssim a_{n}$. For $x_{n}$ a random variable, $x_{n}=o_{p}(a_{n})$ if
$\lim_{n\rightarrow \infty }\text{pr} (|x_{n}/a_{n}|\geq C)=0$ for any $C>0, $
and $x_{n}=O_{p}(a_{n})$ if for any $C>0$ there exists a finite $M>0$ and a
finite $n'$ such that, for all $n>n'$, $\text{pr}(|x_{n}/a_{n}|\geq M)\leq C$. All limits are taken as $n\rightarrow\infty$, so that, when no confusion will result, we use $\lim_{n}$ to denote $\lim_{n\rightarrow\infty}$. The notation $\Rightarrow$ denotes weak convergence.  The Appendix
contains all the proofs.

\subsection{Asymptotic Behavior of the BSL Posterior}
{This section establishes the asymptotic behavior of the BSL posterior $\widehat{\pi}(\theta|S_n)$ in equation \eqref{eq:perturbedpost}.} We do not assume that
${\Delta}_n(\theta)$ is a consistent estimator of $\Sigma_n(\theta)$
to allow the synthetic likelihood covariance to be ``misspecified''.
The following regularity conditions are assumed on $S_n$, $b(\theta)$ and ${\Delta}_n(\theta)$.

\begin{assumption}\label{ass:one}
There exists a sequence of positive real numbers $v_n$ diverging to $\infty$ and a vector $b_0\in\mathbb{R}^d$, $d\ge d_\theta$, such that $V_0:=\lim_{n}\text{var}\left\{v_n(S_n-b_0)\right\}$ exists and $$v_n\left(S_ { n }-b_0\right)\Rightarrow N\left(0, V_0\right),\text{ under } P^{(n)}_0.$$
\end{assumption}

\begin{assumption}\label{ass:four}
	(i) The map $\theta\mapsto{b}(\theta)$
		is continuous, and there exists a unique $\theta_0\in\text{Int}(\Theta)$, such that $b(\theta_0)=b_0$; (ii) for some $\delta>0$, and all $\|\theta-\theta_0\|\le\delta$, the Jacobian $\nabla{b}(\theta)$ exists and is continuous, and $\nabla{b}(\theta_0)$ has full column rank $d_\theta$.
\end{assumption}

\begin{assumption}\label{ass:two}The following conditions
are satisfied for some $\delta>0$: (i) for $n$ large enough, the matrix $v_n^2{\Delta}_n(\theta)$ is positive-definite for all $\|\theta-\theta_0\|\le\delta$; (ii) there exists some matrix $\Delta(\theta)$, positive semi-definite uniformly over $\Theta$, and such that $\sup_{\theta\in\Theta}\|v^2_n{\Delta}^{}_n(\theta)-\Delta^{}(\theta)\|=o_{p}(1)$, and, for all $\|\theta-\theta_0\|\le\delta$, $\Delta(\theta)$ is continuous and positive-definite; (iii) for any $\epsilon>0$, $\sup_{\|\theta-\theta_0\|\ge\epsilon}-\{b(\theta)-b_0\}^\intercal \Delta(\theta)^{-1}\{b(\theta)-b_0\}<0$.
\end{assumption}


\begin{assumption}\label{ass:three}
	For $\theta_0$ defined in Assumption \ref{ass:four}, $\pi(\theta_0)>0$, and $\pi(\cdot)$ is continuous on $\Theta$. For some $p>0$, and all $n$ large enough, $\int_{ \Theta}|v_n^2\Delta_n(\theta)|^{-1/2}\|\theta\|^{p}\pi(\theta)\dt\theta <\infty$.
\end{assumption}

\begin{assumption}\label{ass:propO}
	There exists a function $k:\Theta\rightarrow\mathbb{R}_{+}$ such that: (i) for all $\alpha\in\mathbb{R}^{d}$, $\mathbb{E}\left(\exp \left[\alpha^{\intercal}v_n\left\{S_n(z)-b(\theta)\right\}\right]\right) \leq \exp \left\{\|{\alpha}\|^{2} k(\theta) / 2\right\}$; (ii) there exists a constant $\kappa$  such that $k(\theta)\lesssim \|\theta\|^{\kappa}$; (iii) for all $n$ large enough, $\sup_{\theta\in\Theta}\{\|v_n^2\Delta^{-1}_n(\theta)\|k(\theta)\}<\infty$.
\end{assumption}

 These  assumptions are similar to those used to prove Bernstein--von Mises results in ABC \citep{frazier+mrr18,li+f18b}. In particular, Assumption \ref{ass:one} requires that the observed summaries satisfy a central limit theorem. Assumption \ref{ass:four} ensures that, over $\Theta$, the summaries $S_n(z)$ have a well-behaved limit $b(\theta)$ that is continuous over $\Theta$, can identify $\theta_0$, and whose derivative has full column rank at $\theta_0$. Assumption \ref{ass:four} does not require that $P^{(n)}_0$ corresponds to
$P^{(n)}_{\theta_0}$, so that the model can be misspecified, but instead requires the weaker condition that there exists a unique value $\theta_0\in\Theta$ under which $b(\theta_0)=b_0$, referred to subsequently as the ``true'' parameter value.

{Variants of Assumption~\ref{ass:three} are commonly encountered in the literature on Bayesian asymptotics. In addition to the continuity of $\pi(\theta)$, Assumption~\ref{ass:three} requires the existence of a certain prior moment. This condition is slightly stronger than the prior moment condition needed in the standard case. The need to strengthen this assumption comes from the fact that the matrix $\Delta_n(\theta)$ may be singular far away from $\theta$. As such, in order to ensure the BSL posterior is well-behaved, we require that the prior has thin enough tails in the region where $\Delta_n(\theta)$ is singular, so that the potential singularity of $\Delta_n(\theta)$ does not impact posterior concentration. When $\Delta(\theta)$ in Assumption \ref{ass:two} is positive-definite, uniformly over $\Theta$, this latter condition can be replaced by the standard assumption that $\int_{ \Theta}\|\theta\|^p\pi(\theta)\dt\theta<\infty$ for some $p>0$. }

Assumption \ref{ass:propO} requires that the simulated summaries have a sub-Gaussian tail. Intuitively, this condition requires that the simulated summaries have an exponential moment, and is similar to certain conditions employed by \cite{frazier+mrr18} for ABC.  Without further conditions on the number of model simulations $m$, this assumption seems necessary to ensure that 
the BSL posterior exists, since $\widehat{g}_n(S_n|\theta)$ is defined as an expectation with respect to the distribution of the simulated summaries.

The key difference between the current
 assumptions and those used in the theoretical analysis of ABC is that in BSL 
  the behavior of the quadratic form $\|\Delta_n^{-1/2}(\theta)\{b(\theta)-S_n\}\|^2$  determines the behavior of the synthetic likelihood, and needs to be controlled. 
  Assumption \ref{ass:two}(i) requires that, for $n$ large enough,
   the matrix in this quadratic form is positive-definite for any $\theta$ sufficiently close to $\theta_0$, while Assumption \ref{ass:two}(ii) requires that $\Delta_n(\theta)$ converges uniformly to $\Delta(\theta)$, which is continuous and positive-definite for all $\theta$ sufficiently close to $\theta_0$. Assumption \ref{ass:two}(ii) does not require $\Delta(\theta)$ to be positive-definite uniformly over $\Theta$, and thus it is unnecessary for 
   it to be invertible far from $\theta_0$. This implies that the quadratic form $\|\Delta^{-1/2}(\theta)\{b(\theta)-b_0\}\|^2$ may not be continuous (or finite) uniformly over $\Theta$. In such situations, it is necessary to
    maintain the additional identification assumption given in Assumption \ref{ass:two}(iii). However,  if $\Delta(\theta)$ is continuous over $\Theta$ this identification assumption is automatically satisfied.

Assumptions \ref{ass:one}-\ref{ass:propO} are sufficient to deduce a Bernstein von-Mises result for the BSL posterior. To state this result, define the local parameter $$t:=W_0^{}v_n(\theta-\theta_0)-Z_n,$$where
	$$
	Z_n:=\nabla_{} b\left(\theta_{0}\right)^{\intercal}\Delta(\theta_0)^{-1}v_n\left\{b\left(\theta_{0}\right)-S_ { n }\right\},\;W_0:=\left\{\nabla_{} b\left(\theta_{0}\right)^{\intercal}\Delta(\theta_0)^{-1}\nabla_{} b\left(\theta_{0}\right)\right\}, 
	$$ and denote the BSL posterior for $t$ as 
	$$	\widehat{\pi}(t|S_n):={|W_0^{-1}|}{}\widehat{\pi}\left(\theta_0+W_0^{-1}{t}/v_n+W_0^{-1}Z_n/v_n\large\mid S_n\right)/v_n.
	$$The support of $t$ is denoted by $\mathcal{T}_n:=\{W_0v_n(\theta-\theta_0)-Z_n:\theta\in\Theta\}$, which can be seen as a scaled and shifted translation of $\Theta$.
  The following result states that the total variation distance between $\widehat{\pi}(t|S_n)$, and $N\{t;0,W^{}_0\}$ converges to zero in probability. It also 
  demonstrates that the  covariance of the Gaussian density to which $\widehat{\pi}(t|S_n)$ converges depends on the variance estimator ${\Delta}_n(\theta)$ used in BSL.

\begin{theorem}\label{prop:bvm}
If Assumptions \ref{ass:one}-\ref{ass:propO} are satisfied, and if $m=m(n)\rightarrow\infty$ as $n\rightarrow\infty$, then
$$
\int_{\mathcal{T}_n}\left|\widehat{\pi}(t|S_n)-N\{t;0,W_0^{}\}\right|\dt t=o_{p}(1).
$$
For any $0<\gamma\le2$, if Assumption \ref{ass:three} is satisfied with $p\ge\gamma+\kappa$, then
$$
\int_{\mathcal{T}_n} \|\theta\|^\gamma\left|\widehat{\pi}(t|S_n)-N\{t;0,W_0^{}\}\right|\dt t=o_{p}(1).
$$
\end{theorem}
The second result in Theorem \ref{prop:bvm} demonstrates that, under moment assumptions on the prior, the mean difference between the BSL posterior $\widehat{\pi}(t|S_n)$ and $N\{t;0,W_0\}$ converges to zero in probability. Using this result, we demonstrate that the BSL posterior mean
$\bar{\theta}_n :=\int_{\Theta} \theta \widehat{\pi}(\theta|S_n)\dt\theta$ is asymptotically Gaussian with a covariance matrix that depends on the version of ${\Delta}_n(\theta)$ used in the synthetic likelihood.
\begin{corollary}\label{cor:one}
If the Assumptions in Theorem \ref{prop:bvm} are satisfied, then for $m\rightarrow\infty$ as $n\rightarrow\infty$,
$$
v_{n}(\bar{\theta}_n-\theta_{0}) \Rightarrow N\left[0, W_0^{-1} \left\{ \nabla  b\left(\theta_{0}\right)^{\intercal}\Delta(\theta_0)^{-1}V_0\Delta(\theta_0)^{-1}\nabla  b\left(\theta_{0}\right)\right\}W_0^{-1} \right], \text{ under }P^{(n)}_0.
$$
\end{corollary}

\begin{remark}
	{\normalfont The above results only require weak conditions on the number of simulated datasets, $m$, and  are satisfied for any $m=C\lfloor{n^\gamma}\rfloor$, with $C>0$, $\gamma>0$, and $\lfloor x\rfloor$ denoting the integer floor of $x$. Therefore, Theorem \ref{prop:bvm} and Corollary \ref{cor:one} demonstrate that the choice of $m$ does not strongly impact the resulting inference on $\theta$ and its choice
 should be driven by computational considerations. We note that this requirement is in contrast to ABC, where the choice of tuning parameters, i.e., the tolerance, significantly impacts both the theoretical behavior of ABC and the practical (computing) behavior of ABC algorithms. However,  this lack of dependence on tuning parameters comes at the cost of
  requiring that a version of Assumptions~\ref{ass:two} and \ref{ass:propO} are
   satisfied. ABC requires no condition similar to Assumption \ref{ass:two}, while Assumption \ref{ass:propO} is stronger than the tail conditions on the summaries required  for the ABC posterior to be asymptotically Gaussian. }
\end{remark}

\begin{remark}\label{ref:corr}{\normalfont Theorem \ref{prop:bvm} and Corollary \ref{cor:one} demonstrate the trade-off between using a parsimonious choice for ${\Delta}_n(\theta)$, leading
 to faster computation, and a posterior that correctly quantifies uncertainty. BSL credible sets provide valid uncertainty quantification, in the sense that they have the correct level of asymptotic coverage, when
\begin{flalign*}
\int_{\mathcal{T}_n} t t^{\intercal}\widehat{\pi}(t|S_n)\dt t =&  \nabla  b\left(\theta_{0}\right)^{\intercal}\Delta(\theta^0)^{-1}V_0\Delta(\theta^0)^{-1}\nabla  b\left(\theta_{0}\right)+o_p(1).
\end{flalign*}However, the second part of Theorem \ref{prop:bvm} implies that
$$
\int_{\mathcal{T}_n} t t^{\intercal}\widehat{\pi}(t|S_n)\dt t =  W_0+o_p(1)= \nabla  b\left(\theta_{0}\right)^{\intercal}\Delta(\theta^0)^{-1}\nabla  b\left(\theta_{0}\right)+o_p(1),
$$so that a sufficient condition for the BSL posterior to correctly quantify uncertainty is that
\begin{equation}\label{eq:correct}
{\Delta}(\theta_0)=V_0.
\end{equation}
Satisfying equation \eqref{eq:correct} generally necessitates using
the more computationally intensive variance estimator $\widehat{\Sigma}_n(\theta)$, and that the variance model is ``correctly specified''; here, 
correctly specified means  $\theta_0$ satisfies $b(\theta_0)=b_0$ and  $\theta_0$  also satisfies  equation \eqref{eq:correct}, and where we note that the latter condition {\textit{is not implied by Assumptions \ref{ass:one}-\ref{ass:propO}}}. While a sufficient condition for \eqref{eq:correct} is that $P_\theta^{(n)}=P^{(n)}_0$ for some $\theta_0\in\Theta$, this condition is not necessary in general.
Given the computational costs associated with using $\widehat{\Sigma}_n(\theta)$ when 
the summaries are high-dimensional, Section \ref{adjustments} proposes an adjustment approach to BSL that allows the use of the simpler, possibly misspecified, variance estimator ${\Delta}_n(\theta)$,  but which also yields a posterior that has valid uncertainty quantification.
}		
\end{remark}

\begin{remark}{\normalfont
In contrast to ABC point estimators, Corollary \ref{cor:one} demonstrates that BSL point estimators are generally asymptotically inefficient. 
It is known that $\left\{\nabla b\left(\theta_{0}\right)^{\intercal}V_0^{-1}\nabla  b\left(\theta_{0}\right)\right\}^{-1}$ is the smallest achievable asymptotic variance for any $v_n$-consistent and asymptotically normal estimator of $\theta_0$ based on the parametric class of models $\{P_\theta^{(n)}:\theta\in\Theta\}$ and conditional on the summary statistics $S_n(y)$; see, e.g., \citealp{li+f18a}.
We also have that 
 $$
W_0^{-1} \left\{\nabla  b\left(\theta_{0}\right)^{\intercal}\Delta(\theta^0)^{-1}V_0\Delta(\theta^0)^{-1}\nabla  b\left(\theta_{0}\right)\right\}W_0^{-1}\ge \left\{\nabla  b\left(\theta_{0}\right)^{\intercal}V_0^{-1}\nabla  b\left(\theta_{0}\right)\right\}^{-1};
$$
where
for square matrices $A,B$, $A\ge B$ means that $A-B$ is positive semi-definite. Given this, the BSL posterior mean $\overline \theta_n$ is
 asymptotically efficient only when equation \eqref{eq:correct} is satisfied.
 In this case, BSL simultaneously delivers efficient point estimators and asymptotically correct uncertainty quantification. 
}
\end{remark}

\begin{remark}\normalfont{
		The BSL posterior can be interpreted as a type of quasi-posterior; see, e.g., \citealp{chernozhukov+h03} and \citealp{bissiri2016general}. {However, since the posterior $\widehat{\pi}(\theta|S_n)$ depends on the  ``integrated likelihood'' $\widehat{g}_n(S_n|\theta)$, defined in \eqref{noisySL} and calculated using simulated data,} existing large sample results are not applicable to BSL. }
\end{remark}

\subsection{Computational efficiency}

\cite{li+f18b,li+f18a} discuss the computational efficiency of vanilla and regression-adjusted
ABC algorithms using a rejection sampling method based on a ``good'' proposal density $q_n(\theta)$. They show that regression-adjusted ABC
yields asymptotically correct uncertainty quantification, i.e., credible sets with the correct level of frequentist coverage, and an asymptotically non-zero acceptance rate, while vanilla
ABC can only accomplish one or the other.

This section shows that BSL can deliver correct uncertainty quantification and an asymptotically non-zero acceptance rate, if the number of simulated data sets used in the synthetic likelihood tends to infinity with the sample size. We follow \cite{li+f18b} and consider implementing synthetic likelihood using a rejection sampling algorithm based on the proposal $q_n(\theta)$ 
analogous to the one they consider for ABC.  Following Assumption~\ref{ass:two}(i), there exists a uniform upper bound of the form
$Cv_n^{d_\theta}$ for some $0<C<\infty$ locally in a neighbourhood of $\theta_0$ on $N\left\{S_n;b(\theta),{\Delta}_n(\theta)\right\}$ for $n$ large enough;  an asymptotically valid
rejection sampler then proceeds as follows.
\medskip

\begin{algorithm}[H]
Rejection sampling BSL algorithm
\begin{wideenumerate}
	\item Draw $\theta'\sim q_n(\theta)$
	\item Accept $\theta'$ with probability $(C v_n^{d_\theta})^{-1}\widehat{g}_n(S_n|\theta')=(C v_n^{d_\theta})^{-1}N\left\{S_n;\widehat b_n(\theta'),\Delta_n(\theta')\right\}.$
\end{wideenumerate}
\end{algorithm}

An accepted value from this sampling scheme is a draw from
the density proportional to $q_n(\theta)\widehat{g}_n(S_n|\theta)$. Similarly to the analogous ABC scheme considered in
\cite{li+f18b},  samples from this rejection sampler can be reweighted with importance
weights proportional to $\pi(\theta')/q_n(\theta')$ to recover draws from $\widehat{\pi}(\theta|S_n)\propto \pi(\theta)\widehat{g}_n(S_n|\theta)$.

We choose the proposal density $q_n(\theta)$ to be from the location-scale family $$\mu_n+\sigma_nX,$$ where $X$ is a $d_\theta$-dimensional random variable such that $X\sim q(\cdot)$, $\mathbb{E}_q[X]=0$ and $\mathbb{E}_q[\|X\|^2]<\infty$. The sequences $\mu_n$ and $\sigma_n$ depend on $n$ and satisfy Assumptions \ref{ass:propO} and  \ref{ass:prop}.
\begin{assumption}\label{ass:prop}
	(i) There exists a positive constant $C$, such that $0<\sup_{x}q(x)\leq C<\infty;$ (ii) the sequence $\sigma_n>0$, for all $n\ge1$, satisfies $\sigma_n=o(1)$, and $v_n\sigma_n\rightarrow c_\sigma$, for some positive constant $c_\sigma$; (iii) the sequence $\mu_n$ satisfies  $\sigma^{-1}_n\left(\mu_n-\theta_0\right)=O_p(1)$; (iv) for $h(\theta)=q_n(\theta)/\pi(\theta)$, $\limsup_{n\rightarrow\infty}\int h^2_n(\theta)\pi(\theta|S_n)\dt\theta<\infty$.
\end{assumption}

\begin{remark}
{\normalfont Assumption \ref{ass:prop} formalizes the conditions required of the proposal density and are similar to those required in \cite{li+f18b}. Assumption \ref{ass:prop} is satisfied if the proposal density $q_n(\theta)$ is built from $v_n$-consistent estimators of $\theta_0$, such as those based on pilot runs. }
\end{remark}

The acceptance probability associated with Algorithm 1 is
$$
\widetilde\alpha_n^{}:=({Cv_n^{d_\theta}})^{-1}\int_{\Theta}q_n(\theta)\widehat{g}_n(S_n|\theta)\dt \theta.
$$ We measure the computational efficiency of the rejection
sampling BSL algorithm via the behavior of
$\widetilde\alpha_n^{}$. If $\widetilde\alpha_n^{}$ is asymptotically non-zero, then
 by Corollary 3, and under the restriction in \eqref{eq:correct}, implementing a rejection-based BSL approach can
 yield a posterior that has credible sets with the correct level of frequentist coverage and computational properties that are similar to those of regression-adjusted ABC.

Theorem~\ref{thm:acc} describes the asymptotic behavior of $\widetilde{\alpha}_n$ using the proposal density given in Assumption \ref{ass:prop}. The result uses the following definition: for a random variable $x_n$, we write $x_n=\Xi_p(v_n)$ if there exist constants $0<c\leq C<\infty$ such that $\lim_n\text{pr}\left(c<|x_n/v_n|<C\right)=1$.
\begin{theorem}\label{thm:acc}
	If Assumptions \ref{ass:one}-\ref{ass:prop} are satisfied and if $\int k(\theta)^2\pi(\theta|S_n)\dt\theta<\infty$, then for $m\rightarrow\infty$ as $n\rightarrow\infty$
	$$
	\widetilde\alpha_n=\Xi_p(1)+O_p(1/m).
	$$
\end{theorem}

While Theorem \ref{thm:acc} holds for all choices of $\Delta_n(\theta)$ satisfying Assumption \ref{ass:two}, taking $\Delta_n(\theta)=\Sigma^{}_n(\theta)$ implies that the resulting BSL posterior yields credible sets with the appropriate level of frequentist coverage and that the rejection-based algorithm has a non-negligible acceptance rate asymptotically. Therefore, the result in Theorem \ref{thm:acc} is a BSL version of Theorem 2 in \cite{li+f18b}, demonstrating a similar result, under particular choices of the tolerance sequence, for regression-adjusted ABC.

The example in Section 3 of \citet{price+dln16} compares rejection ABC and a rejection version of synthetic likelihood, where the model is
normal and $\Sigma_n(\theta)$ is constant and does not need to be estimated.  They find that with the prior as the proposal, ABC is more efficient when $d=1$, equally efficient when $d=2$, but less efficient than synthetic likelihood
when $d>2$.  The essence of the example is that the sampling variability in estimating $b_{}(\theta)$ can be equated with the effect of a Gaussian kernel in their
toy normal model for a certain relationship between $\epsilon$ and $m$.  The discussion above suggests that in general models, and
with a good proposal, in large samples the synthetic likelihood is preferable to the vanilla ABC algorithm no matter the dimension of the summary statistic.  However, this
greater computational efficiency is only achieved through the strong tail assumption on the summaries.

\section{Adjustments for misspecification}\label{adjustments}
By~Remark \ref{ref:corr}, if BSL uses a misspecified estimator for the variance for the summaries, in the sense that equation \eqref{eq:correct} does not hold, then the BSL posterior gives invalid uncertainty quantification. This section outlines one approach for adjusting inferences to account for this form of misspecification when Assumption \ref{ass:four} is satisfied, but,
there are other ways to do so.
Suppose $\theta^{q}$, $q=1,\dots, Q$, is an approximate sample from $\widehat{\pi}(\theta|S_n)$, obtained
by  MCMC for example.  Let $\overline{\theta}_n$ denote the synthetic likelihood posterior mean, let $\widetilde{\Gamma}$ denote the synthetic likelihood
posterior covariance, and write $\widehat{\theta}$ and $\widehat{\Gamma}$ for their sample estimates based on $\theta^{q}$, $q=1,\dots, Q$.
Consider the adjusted sample
\begin{align}
  \theta^{A,q} & = \widehat{\theta}+\widehat{\Gamma}\widetilde{\Omega}^{1/2}\widehat{\Gamma}^{-1/2}(\theta^{q}-\widehat{\theta}), \;\;\;\label{adjustedsamp}
\end{align}
$q=1,\dots, Q$,
where $\widetilde{\Omega}$ is an estimate of $\text{var}\left\{\nabla_\theta \log g_n(S_n|\widehat{\theta})\right\}$; the estimation of $\widetilde{\Omega}$ is discussed
below.
We propose using (\ref{adjustedsamp}) as an approximate sample from the posterior, which is similar to the original sample when
the model is correctly specified, but gives asymptotically valid frequentist inference about the pseudo-true parameter value when the model is misspecified.

The motivation for (\ref{adjustedsamp}) is that if $\theta^{q}$ is approximately drawn from the normal distribution $N(\widehat{\theta},\widehat{\Gamma})$,
then $\theta^{A,q}$ is approximately drawn from $N(\widehat{\theta},\widehat{\Gamma}\widetilde{\Omega}\widehat{\Gamma})$.
The results of Corollary 1 imply that
if $\widetilde{\Omega}\approx \text{var}\left\{\nabla_\theta \log g_n(S_n|\theta_0)\right\}$ and $\widehat{\Gamma}$ is approximately
the inverse negative Hessian of $\log g(S_n|\theta)$ at $\theta_0$, then the covariance matrix of the adjusted samples
is approximately that of the sampling distribution of the BSL posterior mean, giving
approximate frequentist validity to
posterior credible intervals based on the adjusted posterior samples.
We now suggest two ways to obtain $\widetilde{\Omega}$.  The first is suitable if
 the model assumed for $y$ is true, but the
 covariance matrix $\lim_n v_n^2 {\Delta}_n(\theta)\neq V_0$, which we refer to as misspecification of the working covariance matrix.
The second way is suitable when the models for both $y$ and the working covariance matrix may be misspecified, but Assumption \ref{ass:four} holds.

\subsection{Estimating $\normalfont{\text{var}}\left\{\nabla_\theta \log g_n(S_n|\theta_0)\right\}$ when the model for $y$ is correct}

\begin{algorithm}\label{alg:one}
Algorithm 1: Estimating $\widetilde{\Omega}$ when the model for $y$ is correct
\begin{enumerate}
\item[]	
\item For $j=1,\dots, J$, draw $S^{(j)}\sim G_n^{\widehat{\theta}}$, where $\widehat{\theta}$ is the estimated synthetic likelihood posterior mean.
\item Approximate $g^{(j)}=\nabla_\theta \log g_n(S^{(j)}|\widehat{\theta})$.
 Section 6.2 discusses the approximation to this gradient as used in the examples.
\item Return
\begin{align*}
  \widetilde{\Omega}=\frac{1}{J-1} \sum_{j=1}^J (g^{(j)}-\bar{g})(g^{(j)}-\bar{g})^{\intercal},
\end{align*}
where $\bar{g}=J^{-1}\sum_{j=1}^J g^{(j)}$.
\end{enumerate}
\end{algorithm}

\subsection{Estimating $\normalfont{\text{var}}\left\{\nabla_\theta \log g_n(S_n|\theta_0)\right\}$ when both the model for $y$ and the covariance matrix may be incorrect}

It may
still be possible estimate $\text{var}\left\{\nabla_\theta \log g_n(S|\theta_0)\right\}$,
 even if the model for $y$ is incorrect.  In particular, if $y_1,\dots, y_n$ are independent,
then we can use the bootstrap to approximate the distribution of $S_n$ at $\theta_0$ and hence estimate
$\text{var}\left\{\nabla_\theta \log g_n(S|\theta_0)\right\}$.  The approximation can be done as in Algorithm 1, but with Step 1 replaced by
\begin{verse}
1.   For $j=1,\dots, J$, sample $y$ with replacement to get a bootstrap sample $y^{(j)}$ with corresponding summary $S^{(j)}$.
\end{verse}
If the data is dependent it may still be possible to use the bootstrap \citep{kriess+p11};
 however the implementation details are model dependent.

\subsection{What the adjustments can and cannot do}

The adjustments suggested above are intended to achieve asymptotically valid frequentist inference when the consistency in \eqref{eq:correct} is not satisfied, i.e., when
$\lim_n {\Delta}_n(\theta)\neq V_0$,
or when the model for $y$ is misspecified, but $S_n$ still satisfies a central limit theorem.
The adjustment will not recover the posterior distribution that is
obtained when the model is correctly specified.  Asymptotically valid frequentist estimation based on the synthetic
likelihood posterior mean for the misspecified synthetic likelihood
is frequentist inference based on a point estimator of $\theta$ that is generally less
efficient than in the correctly specified case.  Matching posterior uncertainty after adjustment to the sampling variability of
such an estimator does not recover the posterior uncertainty from the correctly specified situation.
%

\section{Examples}

\subsection{Toy example}

Suppose that $y_1,\dots, y_n$ are independent observations from a negative binomial distribution $\text{NB}(5,0.5)$ so they  have mean $5$ and variance $10$.
We model the $y_i$ as independent and coming from a $\text{Poisson}(\theta)$ distribution and act as if the likelihood is intractable, basing inference on the sample
mean $\bar{y}$ as the summary statistic $S$.
The pseudo-true parameter value $\theta_0$ is $5$, since this is the parameter value for which the summary statistic mean matches the corresponding mean for the true data generating process.

Under the Poisson model, the synthetic likelihood has $b(\theta)=\theta$ and ${\Delta}_n(\theta)=\theta/n$.  We consider a simulated dataset with $n=20$, and
and deliberately misspecify the variance model in the synthetic likelihood under the Poisson model as ${\Delta}_n(\theta)=\theta/(2n)$.
As noted previously, the deliberate misspecification of $\text{var}(S_n|\theta)$ may be of interest in problems with a high-dimensional $S_n$ as a way of reducing
the number of simulated summaries needed to estimate $\text{var}(S_n|\theta)$ with reasonable precision; for example, we might assume $\text{var}(S_n|\theta)$ is diagonal or based on a factor model.

Figure~\ref{fig1} shows the estimated posterior densities obtained using a number of different
 approaches, when the prior for $\theta$ is $\text{Gamma}(2,0.5)$.
The narrowest green density is obtained from the synthetic likelihood with a misspecified variance.
This density is obtained using 50,000 iterations of a Metropolis-Hastings  MCMC algorithm 
with a normal random walk proposal.
The red density is the exact posterior assuming the Poisson likelihood
is correct, which is $\text{Gamma}(2+n\bar{y},0.5+n)$.  The purple  kernel density estimate based on the adjusted synthetic likelihood samples; it uses the method of Section 5.1 for the adjustment in which the $y$ model is assumed correct but the working covariance matrix is misspecified.  The figure shows that the adjustment gives
a result very close to the exact posterior under an assumed Poisson model.  Finally, the light blue kernel density estimate based on the samples from 
the adjusted synthetic likelihood, uses the method of Section 5.2 based on the bootstrap without assuming that the Poisson model is correct.
This posterior is more dispersed than the one obtained under the Poisson assumption, since the negative binomial generating density
is overdispersed relative to the Poisson, and hence the observed $\bar{y}$ is less informative about the pseudo-true parameter value than implied by the Poisson model.

\begin{figure}
\begin{center}
\begin{tabular}{c}
\includegraphics[width=100mm]{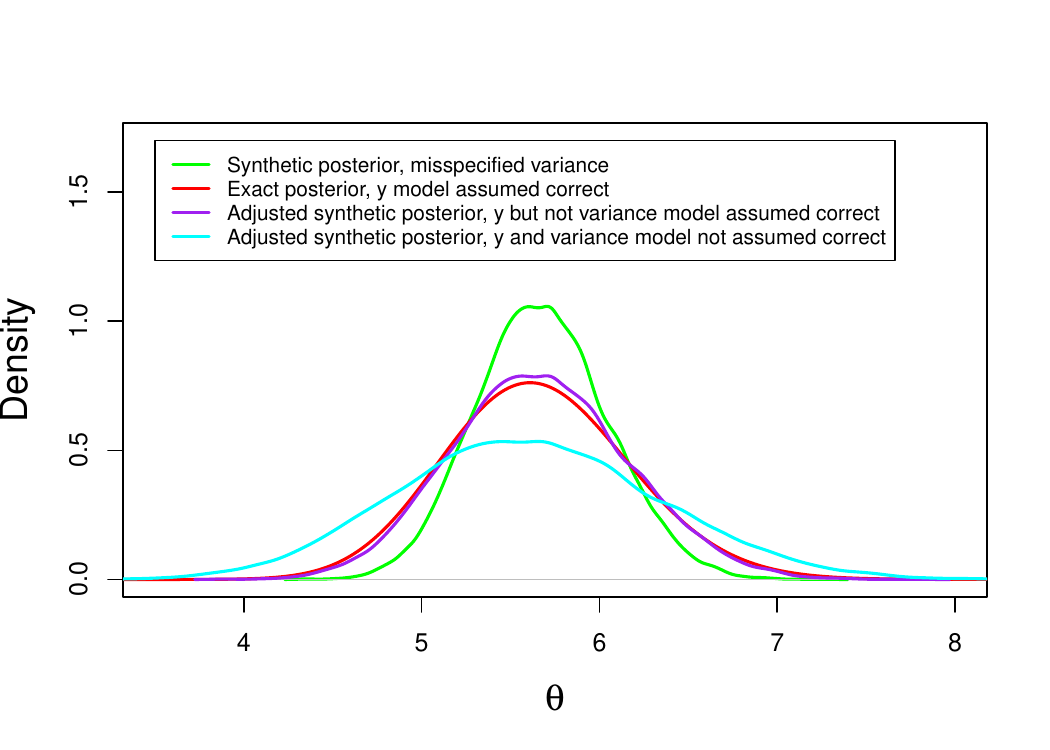}
\end{tabular}
\end{center}
\caption{\label{fig1}
Exact, synthetic and adjusted synthetic posterior densities for the toy example.
}
\end{figure}

\subsection{Examples with a high-dimensional summary statistic}

This section explores the efficacy of the adjustment method when using a misspecified covariance in the presence of a high-dimensional summary statistic $S$. All the examples below 
use the \citet{warton08} shrinkage estimator
 to reduce the number of simulations required to obtain a stable covariance matrix estimate in the synthetic likelihood.
Based on $m$ independent model simulations  the covariance matrix estimate is 
\begin{align}
\widehat{\Sigma}_\gamma & =\widehat{D}^{1/2}\left\{\gamma\widehat{C}+(1-\gamma)I\right\}\widehat{D}^{1/2}, \label{wartonest}
\end{align}
where $\widehat{C}$ is the sample correlation matrix, $\widehat{D}$ is the diagonal matrix of
component sample variances, and  $\gamma \in [0,1]$ is a shrinkage parameter.  The matrix
$\widehat{\Sigma}_\gamma$ is non-singular if $\gamma<1$,
 even if $m$ is less than the dimension of the observations.  This estimator shrinks the sample correlation matrix towards the identity.  When $\gamma = 1$ (resp.\ $\gamma = 0$) there is no shrinkage (resp.\ a diagonal covariance matrix is produced).  We choose $\gamma$ to require only
  1/10 of the simulations required by the  standard synthetic likelihood for Bayesian inference.  We are interested in the shrinkage effect  on the synthetic likelihood approximation and whether our methods can offer a useful correction.  Heavy
shrinkage is used to stabilize covariance estimation in the synthetic likelihood; 
So, the  shrinkage estimator can be thought of as specifying ${\Delta}_n(\theta)$.

To perform the adjustment, it is necessary to approximate the derivative of the synthetic log-likelihood, with shrinkage applied, at a point estimate of the parameter;  we take  this point
as the estimated posterior mean $\widehat{\theta}$ of the BSL approximation.  A computationally efficient approach for estimating these derivatives uses
Gaussian process emulation of the approximate log-likelihood surface based on a pre-computed training sample.  The training sample is constructed around $\widehat{\theta}$,
because this is the only value of $\theta$ for which the approximate derivative is required.   We sample $B$ values using Latin
 hypercube sampling from the hypercube defined by $[\widehat{\theta}_k - \delta_k, \widehat{\theta}_k + \delta_k]$, where $\widehat{\theta}_k$ denotes the $k$th component of $\widehat{\theta}$,
  and take $\delta_k$ as the approximate posterior standard deviation of $\theta_k$; {see \citealp{mckay+bc1979} for details on Latin hypercube sampling}.  Denote the collection of training data as $\mathcal{T} = \{\theta^b,\mu^b,\Sigma_\gamma^b\}_{b=1}^B$, where $\theta^b$ is the $b$th training sample and $\mu^b$ and $\Sigma_\gamma^b$ are the corresponding estimated mean and covariance of the synthetic likelihood from the $m$ model simulations, respectively.  This training sample is stored and recycled for each simulated dataset generated from $\widehat{\theta}$ that needs to be processed in the adjustment method, which is  now 
  described in more detail.

For a simulated statistic $S^{(j)}$ generated from the model at $\widehat{\theta}$, the shrinkage synthetic log-likelihood is rapidly computed at each $\theta^b$ in the training data $\mathcal{T}$ using the pre-stored information, denoted as 
$l^b = l(\theta^b;S^{(j)})$.  A  Gaussian process regression model
 based on the collection $\{\theta^b,l^b\}_{b=1}^B$, is then fitted 
 with $l^b$ as the response and $\theta^b$ as the predictor. We use a zero-mean  Gaussian process with squared exponential covariance function having
  different length scales for different components of $\theta$ and then approximate the gradient of $\log g_n(S^{(j)}|\widehat{\theta})$ by computing the derivative of the smooth predicted mean function of the  Gaussian process at $\widehat{\theta}$.
We can show that this is equivalent to considering the bivariate  Gaussian process of the original process and its derivative, and performing
prediction for the derivative value.  The derivative is estimated using a finite difference approximation because it is simpler than computing the estimate
explicitly. The matrix $\widetilde{\Omega}$ is constructed using $B=200$ training samples and $J=200$ datasets.
Both examples below use 20,000 iterations of MCMC for standard and shrinkage BSL with a multivariate normal random walk proposal.  In each case, the covariance of the random walk was set based on an approximate posterior covariance obtained by pilot  MCMC runs.

\subsection*{Moving average example}

We consider the second order moving average model (MA(2)):
\begin{equation*}
y_t = z_t + \theta_1 z_{t-1} + \theta_2 z_{t-2},
\end{equation*}
for $t = 1,\dots,n$, where $z_t \sim N(0,1)$, $t=-1,\dots,n$, and $n$ is the number of observations in the time series.
{To ensure invertibility of the MA(2) model, the space $\Theta$ is constrained as
 $-1<\theta_2<1,\theta_1+\theta_2>-1,\theta_1-\theta_2<1$ and we specify a uniform prior over this region.}
The density of the observations from an MA(2) model is multivariate normal, with $\mathrm{var}(y_t)=1+\theta_1^2+\theta_2^2$, $\mathrm{cov}(y_t,y_{t-1})=\theta_1+\theta_1\theta_2$, $\mathrm{cov}(y_t,y_{t-2})=\theta_2$, with all other covariances equal to $0$.  The coverage assessment is based on 100 simulated datasets from the model with true parameters $\theta_1 = 0.6$ and $\theta_2 = 0.2$.  Here, we consider a reasonably large sample size of $n=10^4$.

This example uses the first 20 autocovariances as the summary statistic. The autocovariances are a reasonable choice here as they are informative about the parameters and satisfy a central limit theorem \citep{Hannan1976}.

To compare with BSL, we use ABC
with a Gaussian weighting kernel having covariance $\epsilon V$,
where $V$ is a positive-definite matrix.  To favor  the ABC method, 
$V$ is set as the covariance matrix of the summary statistic obtained via many
simulations at the true parameter value.  This ABC
likelihood corresponds to using the Mahalanobis distance function with a Gaussian weighting kernel.  We also consider BSL with a diagonal covariance, and the corresponding adjustment described in Section 4.

To sample from the approximate posterior distributions for each method and dataset, importance sampling with a Gaussian proposal is used with a mean given by the approximate posterior mean and a covariance that is
 twice the approximate posterior covariance.     We treat this as the `good' proposal distribution for posterior inference.  The initial approximations of the (approximate) posteriors are obtained from pilot runs.

For BSL, we use $10,000$ importance samples and consider $m=100$, 200, 500 and $2000$ for estimating the synthetic likelihood.  Table~\ref{tab:ma_coverages_compare} reports 
the mean and minimum effective sample size (ESS)  of the importance sampling approximations \citep{kong92} over the 100 datasets. It shows that for standard BSL with $m=100$ the minimum ESS is small, suggesting this is close to the smallest value of $m$ that can be considered
 to ensure the results are not dominated by Monte Carlo error.  For a given $m$, the ESS values are larger when using a diagonal covariance matrix, demonstrating the computational benefit over estimating a full covariance matrix in standard BSL.  For the BSL adjustment approach, the initial sample before adjustment consists of a re-sample of size 1000 from the relevant diagonal BSL importance sampling approximation to avoid having to work with a weighted sample.

We use 10 million importance samples for ABC-twice as many model simulations compared to BSL with 
$m=500$.  For each dataset, $\epsilon$ is selected so that the ESS is around 1,000, to reduce
$\epsilon$ as much as possible, while  ensuring that the results are robust to Monte Carlo error.   To reduce storage,  a resample of size 1,000 is taken from the ABC importance sampling approximation to produce the final ABC approximation.
We also apply the local regression adjustment of \cite{beaumont+zb02}
to the ABC samples for each dataset.

Table~\ref{tab:ma_coverages_compare} presents the
 estimated marginal coverage rates for $\theta_1$, $\theta_2$ marginally, and the joint coverage for $(\theta_1,\theta_2)$, for nominal coverages of $95\%$, $90\%$ and $80\%$ using kernel density estimates.  The densities are estimated from 1000 samples, performing resampling for the importance sampling approximations when required to avoid dealing with a weighted sample.

 It is evident that standard BSL produces reasonable coverage rates, with some undercoverage at the 80\% nominal rate; $m$ seems to have negligible effect on the  estimated coverage. 
  BSL with a diagonal covariance produces gross overcoverage for $\theta_1$.  Interestingly, despite the overcoverage for $\theta_1$, there is undercoverage at the 95\% and 90\% nominal rates for the joint confidence regions for $\theta_1$ and $\theta_2$,  due to the incorrect estimated dependence structure based on the misspecified covariance.  In contrast, the adjusted BSL results produce accurate coverage rates for the marginals and the joint.

The ABC method produces substantial overcoverage.  ABC with regression adjustment produces more accurate coverage rates, although  some overcoverage remains in general.

\begin{table}
	\centering
	\begin{tabular}{|ccccccc|}
		\hline
		method & $m$ & mean ESS & min ESS &  $95\%$ & $90\%$ & $80\%$ \\
		\hline
		BSL &		100 & 1400 & 21 &		96/97/93 &	91/88/86 &	72/74/73  \\
		BSL &		200 & 3000 & 240 &		95/97/91 &	91/89/88 &	73/78/74  \\
		BSL &		500 & 5000 & 2000 &		95/96/94 &	91/88/88 &	73/74/76  \\
		BSL &		2000 & 6700 & 4900 &		95/97/91 &	89/88/86 &	71/74/75  \\
		BSL diag & 100 & 4200 & 620 & 99/95/89 & 97/88/86 &	95/78/81  \\
		BSL diag &  200 & 5400 & 1500 & 99/95/90 & 98/88/85 &	94/78/81  \\
		BSL diag & 500 & 6500 & 3400 & 99/95/89 & 98/88/87 &	94/78/80  \\
		BSL diag & 2000 & 7200 & 6000 & 99/95/90 & 97/87/87 &	94/78/76  \\
		BSL adj &	100 & - & - & 95/95/94 & 91/92/92 &	80/80/80  \\
		BSL adj &		200 & - & - & 96/95/96 & 91/90/91 &	79/81/77  \\
		BSL adj &		500 & - & - & 94/95/93 & 91/88/86 &	80/80/80  \\
		BSL adj &		2000 & - & - & 95/95/93 & 91/88/85 &	80/78/79  \\
		ABC &		- & - & - &	 98/100/97	 & 96/99/96	 & 89/93/94	  \\
		ABC reg &		- & - & - & 97/97/94	 & 93/96/90	 & 82/84/87	   \\
		\hline
	\end{tabular}
	\caption{Estimated coverage for credible intervals having nominal 95/90/80\% credibility for standard BSL, BSL with a diagonal covariance (BSL diag), BSL diag with an adjustment (BSL adj), ABC and regression adjustment ABC (ABC reg) for $\theta_1/\theta_2/(\theta_1,\theta_2)$.  }
	\label{tab:ma_coverages_compare}
\end{table}

\subsection*{Toad Example}

This example is an individual-based model of a species called Fowler's Toads (\textit{Anaxyrus fowleri}) developed by \citet{Marchand2017}, which was previously analysed by \citet{an+nd18}.  The example is briefly described here;  see \citet{Marchand2017} and \citet{an+nd18} for further details.

The model assumes that a toad hides in its refuge site in the daytime and moves to a randomly chosen foraging place at night.  GPS location data are collected on $n_t$ toads for $n_d$ days, so the matrix of observations $Y$ is $n_d \times n_t$ dimensional.
This example uses both simulated and real data.
The simulated data uses $n_t=66$ and $n_d=63$ and summarize the data by $4$ sets of statistics comprising the relative moving distances for time lags of $1,2,4$ and $8$ days. For instance, $y_1$ consists of the displacement information of lag $1$ day, $y_1 = \{|Y_{i,j}-Y_{i+1,j}| ; 1 \leq i \leq n_{d}-1, 1 \leq j \leq n_t \}$.


Simulating from the model involves two processes. For each toad, we first generate an overnight displacement, $\Delta y$, then mimic the returning behaviour with a simplified model. The overnight displacement is assumed to belong to the L\'evy-alpha stable distribution family, with stability parameter $\alpha$ and scale parameter $\delta$.  With probability $1-p_0$, the toad takes refuge at the location it moved to.  With probability $p_0$, the toad returns to the same refuge site as day $1 \leq i \leq M$ (where $M$ is the number of days the simulation has run for), where $i$ is selected randomly from ${1,2,\dots,M}$ with equal probability.  For the simulated data,  $\theta=(\alpha,\delta,p_0)=(1.7,35,0.6)$, which is a parameter value fitting the real data well, and assume
a uniform prior over $(1,2) \times (0,100) \times (0,0.9)$ for $\theta$.

As in \citet{Marchand2017}, the dataset of displacements is split into two components.  If the absolute value of the displacement is less than 10 metres, it is assumed the toad has returned to its starting location.  For the summary statistic, we consider the number of toads that returned.  For the non-returns (absolute displacement greater than 10 metres), we calculate the log difference between adjacent $p$-quantiles with $p=0,0.1,\ldots,1$ and also the median.  These statistics are computed separately for the four time lags, resulting in a $48$ dimensional statistic.
For standard BSL, $m=500$ simulations are used per  MCMC iteration.  However, with a shrinkage parameter of $\gamma=0.1$,  it was only necessary to use $m=50$ simulations per MCMC iteration.  For the simulated data,   the  MCMC acceptance rates are 16\% and 21\% for standard and shrinkage BSL, respectively.   For the real data, the acceptance rates are both roughly 24\%.

Figure \ref{fig:toad_results} summarizes the results for the simulated data and shows that the
shrinkage BSL posterior underestimates the variance and has the wrong
dependence structure compared to the standard BSL posterior.
The adjusted posterior produces uncertainty quantification that is closer to the standard BSL procedure,
although its larger variances indicate that there is a loss in efficiency in using frequentist inference based on
the shrinkage BSL point estimate.  The results for the real data in Figure \ref{fig:toad_results_realdata} are qualitatively similar.  There is less difference in the posterior means between the standard and shrinkage BSL methods for the real data compared to the simulated data, and generally less variance inflation in the adjusted results for the real data compared to the simulated data.

\begin{figure}
	\centering
	\includegraphics[width=150mm]{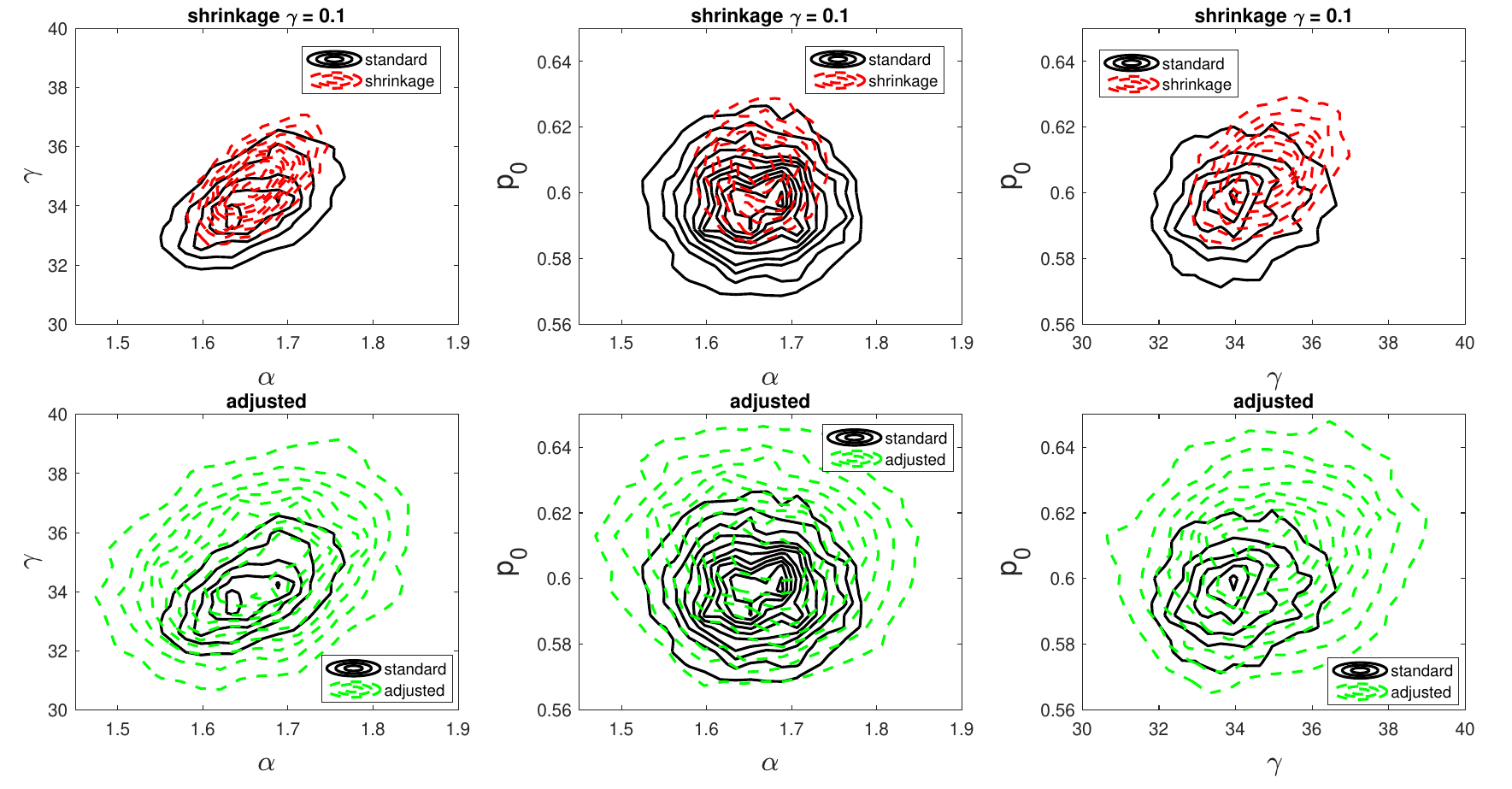}
	\caption{\label{fig:toad_results}
		Adjustment results for the toad example based on the simulated data.  The panels in the top row are  bivariate contour plots of the standard and shrinkage BSL posteriors.  The panels in the bottom row are bivariate contour plots of the standard and adjusted BSL posteriors.
	}
\end{figure}

\begin{figure}
	\centering
	\includegraphics[width=150mm]{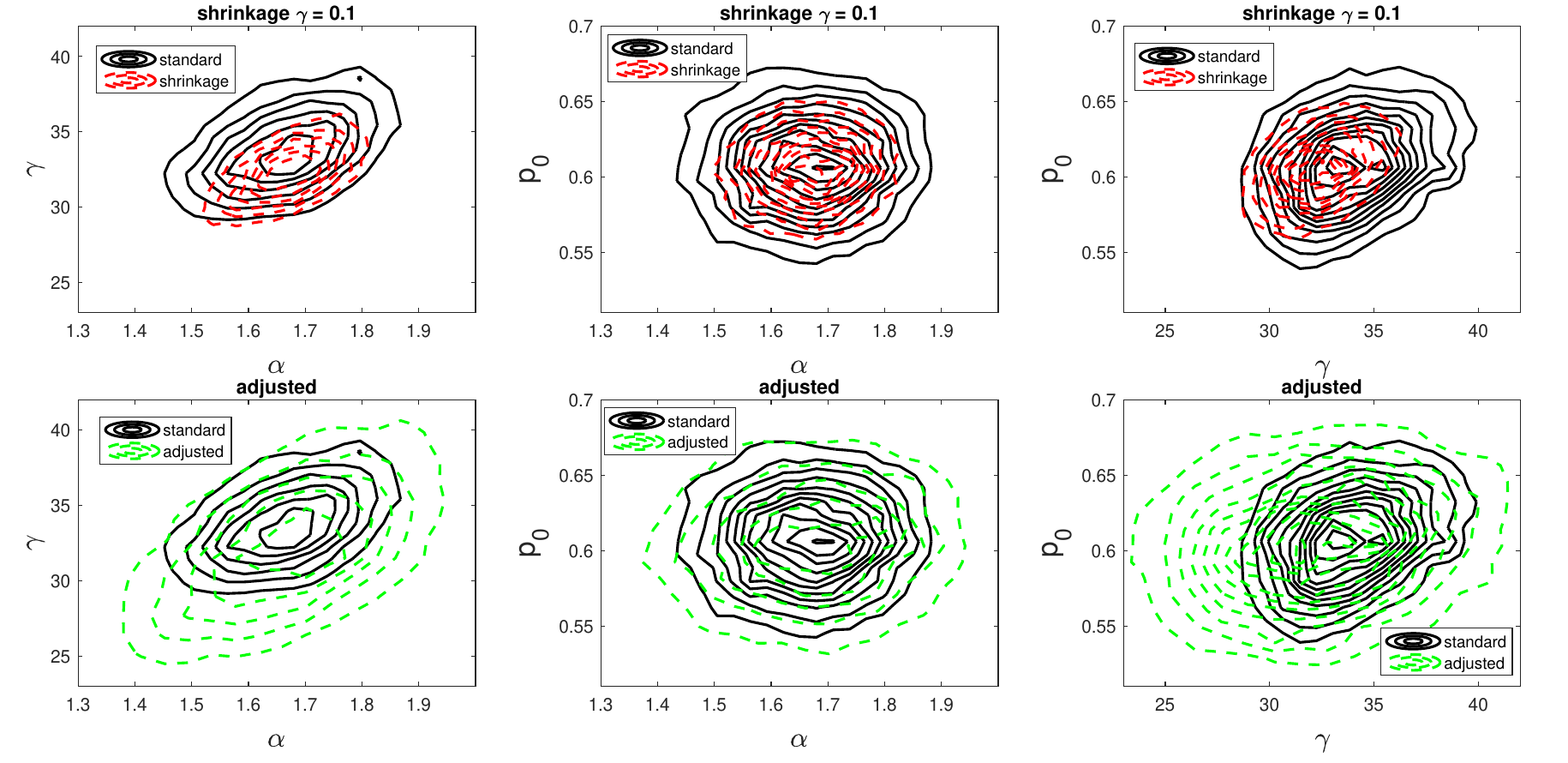}
	\caption{\label{fig:toad_results_realdata}
		Adjustment results for the toad example based on the real data.  The top row panels are  bivariate contour plots of the standard and shrinkage BSL posteriors.  The bottom row panels are bivariate contour plots of the standard and adjusted BSL posteriors.
	}
\end{figure}

\section{Discussion}

Our article examines the asymptotic behaviour of Bayesian inference using the synthetic likelihood when the summary statistic satisfies a central limit theorem.  The synthetic likelihood asymptotically quantifies uncertainty similarly to ABC methods under appropriate algorithmic settings and assumptions leading to correct uncertainty quantification.  We also examine
the effect of estimating the mean and covariance matrix in synthetic likelihood algorithms, as well as the computational efficiency of similar versions of rejection and importance sampling algorithms for BSL
and ABC.  BSL is more efficient than vanilla ABC, 
 and behaves similarly to regression-adjusted ABC.

{Adjustments are also discussed for a  misspecified synthetic likelihood covariance
of the synthetic likelihood.}  
These adjustments may also
be useful when the model for $y$ is misspecified, and inference on the pseudo-true parameter
 is of interest.
{Our adjustment methods do not help correct inference in the case where the summary statistics are not normal.}  Some approaches
consider more complex parametric models than the normal for addressing this issue, and the asymptotic framework
developed here could be adapted to other parametric model approximations for the summaries.
These extensions are left to future work.

Although our adjustments could be useful when the model for $y$ is misspecified, it is helpful to distinguish
different types of misspecification.
Model incompatibility is said to occur when it is impossible to recover the observed summary statistic
for any $\theta$, but we do not investigate
 the behaviour of synthetic likelihood in detail in this case.
\citet{Frazier2019} and \citet{frazier+rr17} demonstrate that standard BSL and ABC can both perform poorly under incompatibility. \citet{Frazier2019} propose some extensions to BSL allowing greater robustness and computational efficiency
in this setting. More research is  needed to compare BSL and ABC when
model incompatibility occurs.

\section*{Acknowledgments}
David Frazier was supported by the Australian Research Council's Discovery Early Career Researcher Award funding scheme (DE200101070). David Nott was supported by a Singapore Ministry of Education Academic Research Fund Tier 1
grant and is affiliated with the Operations Research and Analytics Research cluster at the National University of Singapore.    Christopher Drovandi was supported by an Australian Research Council Discovery Project (DP200102101).    Robert Kohn was partially supported by the Center of Excellence grant CE140100049 and Robert Kohn,
Christopher Drovandi and David Frazier are affiliated with the Australian Centre of Excellence for Mathematical and Statistical Frontiers.  We thank Ziwen An for preparing computer code for the toad example.

\bibliographystyle{chicago}
\bibliography{references}

\begin{thebibliography}{}

\bibitem[\protect\citeauthoryear{An, Nott, and Drovandi}{An
  et~al.}{2020}]{an+nd18}
An, Z., D.~J. Nott, and C.~Drovandi (2020).
\newblock Robust {B}ayesian synthetic likelihood via a semi-parametric
  approach.
\newblock {\em Statistics and Computing\/}~{\em 30}, 543--557.

\bibitem[\protect\citeauthoryear{An, South, Drovandi, and Nott}{An
  et~al.}{2019}]{An2016}
An, Z., L.~F. South, C.~C. Drovandi, and D.~J. Nott (2019).
\newblock Accelerating {B}ayesian synthetic likelihood with the graphical
  lasso.
\newblock {\em Journal of Computational and Graphical Statistics\/}~{\em
  28\/}(2), 471--475.

\bibitem[\protect\citeauthoryear{Beaumont, Zhang, and Balding}{Beaumont
  et~al.}{2002}]{beaumont+zb02}
Beaumont, M.~A., W.~Zhang, and D.~J. Balding (2002).
\newblock Approximate {Bayesian} computation in population genetics.
\newblock {\em Genetics\/}~{\em 162}, 2025--2035.

\bibitem[\protect\citeauthoryear{Bissiri, Holmes, and Walker}{Bissiri
  et~al.}{2016}]{bissiri2016general}
Bissiri, P.~G., C.~C. Holmes, and S.~G. Walker (2016).
\newblock A general framework for updating belief distributions.
\newblock {\em Journal of the Royal Statistical Society. Series B, Statistical
  methodology\/}~{\em 78\/}(5), 1103.

\bibitem[\protect\citeauthoryear{Chaudhuri, Ghosh, Nott, and Pham}{Chaudhuri
  et~al.}{2020}]{chaudhuri+gnp18}
Chaudhuri, S., S.~Ghosh, D.~J. Nott, and K.~C. Pham (2020).
\newblock On a variational approximation based empirical likelihood {ABC}
  method.
\newblock arXiv:2011.07721.

\bibitem[\protect\citeauthoryear{Chernozhukov and Hong}{Chernozhukov and
  Hong}{2003}]{chernozhukov+h03}
Chernozhukov, V. and H.~Hong (2003).
\newblock An {MCMC} approach to classical estimation.
\newblock {\em Journal of Econometrics\/}~{\em 115\/}(2), 293 -- 346.

\bibitem[\protect\citeauthoryear{Deligiannidis, Doucet, and Pitt}{Deligiannidis
  et~al.}{2018}]{deligiannidis2018correlated}
Deligiannidis, G., A.~Doucet, and M.~K. Pitt (2018).
\newblock The correlated pseudo-marginal method.
\newblock {\em Journal of the Royal Statistical Society: Series B (Statistical
  Methodology)\/}~{\em 80\/}(5), 839--870.

\bibitem[\protect\citeauthoryear{Doucet, Pitt, Deligiannidis, and Kohn}{Doucet
  et~al.}{2015}]{doucet+pdk15}
Doucet, A., M.~K. Pitt, G.~Deligiannidis, and R.~Kohn (2015).
\newblock {Efficient implementation of Markov chain Monte Carlo when using an
  unbiased likelihood estimator}.
\newblock {\em Biometrika\/}~{\em 102\/}(2), 295--313.

\bibitem[\protect\citeauthoryear{Drovandi, Pettitt, and Lee}{Drovandi
  et~al.}{2015}]{drovandi+pl15}
Drovandi, C.~C., A.~N. Pettitt, and A.~Lee (2015).
\newblock Bayesian indirect inference using a parametric auxiliary model.
\newblock {\em Statistical Science\/}~{\em 30\/}(1), 72--95.

\bibitem[\protect\citeauthoryear{Everitt}{Everitt}{2017}]{everitt17}
Everitt, R.~G. (2017).
\newblock Boostrapped synthetic likelihood.
\newblock arXiv:1711.05825.

\bibitem[\protect\citeauthoryear{Fasiolo, Wood, Hartig, and Bravington}{Fasiolo
  et~al.}{2018}]{Fasiolo2016}
Fasiolo, M., S.~N. Wood, F.~Hartig, and M.~V. Bravington (2018).
\newblock An extended empirical saddlepoint approximation for intractable
  likelihoods.
\newblock {\em Electronic Journal of Statistics\/}~{\em 12\/}(1), 1544--1578.

\bibitem[\protect\citeauthoryear{Forneron and Ng}{Forneron and
  Ng}{2018}]{forneron+n18}
Forneron, J.-J. and S.~Ng (2018).
\newblock {The ABC of simulation estimation with auxiliary statistics}.
\newblock {\em Journal of Econometrics\/}~{\em 205\/}(1), 112--139.

\bibitem[\protect\citeauthoryear{Frazier and Drovandi}{Frazier and
  Drovandi}{2019}]{Frazier2019}
Frazier, D.~T. and C.~Drovandi (2019).
\newblock Robust approximate {B}ayesian inference with synthetic likelihood.
\newblock {\em arXiv preprint arXiv:1904.04551\/}.

\bibitem[\protect\citeauthoryear{Frazier, Martin, Robert, and Rousseau}{Frazier
  et~al.}{2018}]{frazier+mrr18}
Frazier, D.~T., G.~M. Martin, C.~P. Robert, and J.~Rousseau (2018).
\newblock Asymptotic properties of approximate {B}ayesian computation.
\newblock {\em Biometrika\/}~{\em 105\/}(3), 593--607.

\bibitem[\protect\citeauthoryear{Frazier, Robert, and Rousseau}{Frazier
  et~al.}{2020}]{frazier+rr17}
Frazier, D.~T., C.~P. Robert, and J.~Rousseau (2020).
\newblock Model misspecification in approximate {B}ayesian computation:
  consequences and diagnostics.
\newblock {\em Journal of the Royal Statistical Society: Series B (Statistical
  Methodology)\/}.

\bibitem[\protect\citeauthoryear{Gutmann and Corander}{Gutmann and
  Corander}{2016}]{gutmann+c15}
Gutmann, M.~U. and J.~Corander (2016).
\newblock Bayesian optimization for likelihood-free inference of
  simulator-based statistical models.
\newblock {\em Journal of Machine Learning Research\/}~{\em 17\/}(125), 1--47.

\bibitem[\protect\citeauthoryear{Hannan}{Hannan}{1976}]{Hannan1976}
Hannan, E.~J. (1976).
\newblock The asymptotic distribution of serial covariances.
\newblock {\em The Annals of Statistics\/}~{\em 4\/}(2), 396--399.

\bibitem[\protect\citeauthoryear{Hsu, Kakade, Zhang, et~al.}{Hsu
  et~al.}{2012}]{hsu2012tail}
Hsu, D., S.~Kakade, T.~Zhang, et~al. (2012).
\newblock A tail inequality for quadratic forms of subgaussian random vectors.
\newblock {\em Electronic Communications in Probability\/}~{\em 17}.

\bibitem[\protect\citeauthoryear{Kong}{Kong}{1992}]{kong92}
Kong, A. (1992).
\newblock A note on importance sampling using standardized weights.
\newblock Chicago Dept. of Statistics Tech. Rep 348.

\bibitem[\protect\citeauthoryear{Kreiss and Paparoditis}{Kreiss and
  Paparoditis}{2011}]{kriess+p11}
Kreiss, J.-P. and E.~Paparoditis (2011).
\newblock Bootstrap methods for dependent data: A review.
\newblock {\em Journal of the Korean Statistical Society\/}~{\em 40\/}(4), 357
  -- 378.

\bibitem[\protect\citeauthoryear{Lehmann and Casella}{Lehmann and
  Casella}{1998}]{lehmann2006theory}
Lehmann, E.~L. and G.~Casella (1998).
\newblock {\em Theory of point estimation}.
\newblock Springer Science \& Business Media.

\bibitem[\protect\citeauthoryear{Li and Fearnhead}{Li and
  Fearnhead}{2018a}]{li+f18b}
Li, W. and P.~Fearnhead (2018a).
\newblock {Convergence of regression-adjusted approximate Bayesian
  computation}.
\newblock {\em Biometrika\/}~{\em 105\/}(2), 301--318.

\bibitem[\protect\citeauthoryear{Li and Fearnhead}{Li and
  Fearnhead}{2018b}]{li+f18a}
Li, W. and P.~Fearnhead (2018b).
\newblock {On the asymptotic efficiency of approximate Bayesian computation
  estimators}.
\newblock {\em Biometrika\/}~{\em 105\/}(2), 285--299.

\bibitem[\protect\citeauthoryear{Marchand, Boenke, and Green}{Marchand
  et~al.}{2017}]{Marchand2017}
Marchand, P., M.~Boenke, and D.~M. Green (2017).
\newblock A stochastic movement model reproduces patterns of site fidelity and
  long-distance dispersal in a population of {F}owler's toads ({A}naxyrus
  fowleri).
\newblock {\em Ecological Modelling\/}~{\em 360}, 63 -- 69.

\bibitem[\protect\citeauthoryear{McKay, Beckman, and Conover}{McKay
  et~al.}{1979}]{mckay+bc1979}
McKay, M., R.~Beckman, and W.~Conover (1979).
\newblock Comparison of three methods for selecting values of input variables
  in the analysis of output from a computer code.
\newblock {\em Technometrics\/}~{\em 21\/}(2), 239--245.

\bibitem[\protect\citeauthoryear{Meeds and Welling}{Meeds and
  Welling}{2014}]{meeds+w14}
Meeds, E. and M.~Welling (2014).
\newblock {GPS-ABC: G}aussian process surrogate approximate {B}ayesian
  computation.
\newblock In {\em Proceedings of the Thirtieth Conference on Uncertainty in
  Artificial Intelligence}, UAI'14, Arlington, VA, pp.\  593--602. AUAI Press.

\bibitem[\protect\citeauthoryear{Mengersen, Pudlo, and Robert}{Mengersen
  et~al.}{2013}]{mengersen+pr13}
Mengersen, K.~L., P.~Pudlo, and C.~P. Robert (2013).
\newblock Bayesian computation via empirical likelihood.
\newblock {\em Proceedings of the National Academy of Sciences\/}~{\em
  110\/}(4), 1321--1326.

\bibitem[\protect\citeauthoryear{M\"{u}ller}{M\"{u}ller}{2013}]{muller13}
M\"{u}ller, U.~K. (2013).
\newblock Risk of {B}ayesian inference in misspecified models, and the sandwich
  covariance matrix.
\newblock {\em Econometrica\/}~{\em 81\/}(5), 1805--1849.

\bibitem[\protect\citeauthoryear{Ong, Nott, Tran, Sisson, and Drovandi}{Ong
  et~al.}{2018a}]{ong+ntsd16}
Ong, V. M.-H., D.~J. Nott, M.-N. Tran, S.~Sisson, and C.~Drovandi (2018a).
\newblock Variational {B}ayes with synthetic likelihood.
\newblock {\em Statistics and Computing\/}~{\em 28\/}(4), 971--988.

\bibitem[\protect\citeauthoryear{Ong, Nott, Tran, Sisson, and Drovandi}{Ong
  et~al.}{2018b}]{ong+ntsd18}
Ong, V. M.-H., D.~J. Nott, M.-N. Tran, S.~A. Sisson, and C.~C. Drovandi
  (2018b).
\newblock Likelihood-free inference in high dimensions with synthetic
  likelihood.
\newblock {\em Computational Statistics and Data Analysis\/}~{\em 128},
  271--291.

\bibitem[\protect\citeauthoryear{Pitt, Silva, Giordani, and Kohn}{Pitt
  et~al.}{2012}]{pitt+sgk12}
Pitt, M.~K., R.~d.~S. Silva, P.~Giordani, and R.~Kohn (2012).
\newblock {On some properties of Markov chain Monte Carlo simulation methods
  based on the particle filter}.
\newblock {\em Journal of Econometrics\/}~{\em 171\/}(2), 134--151.

\bibitem[\protect\citeauthoryear{Price, Drovandi, Lee, and Nott}{Price
  et~al.}{2018}]{price+dln16}
Price, L.~F., C.~C. Drovandi, A.~C. Lee, and D.~J. Nott (2018).
\newblock Bayesian synthetic likelihood.
\newblock {\em Journal of Computational and Graphical Statistics\/}~{\em
  27\/}(1), 1--11.

\bibitem[\protect\citeauthoryear{Priddle, Sisson, Frazier, and
  Drovandi}{Priddle et~al.}{2019}]{priddle+sfd19}
Priddle, J.~W., S.~A. Sisson, D.~T. Frazier, and C.~Drovandi (2019).
\newblock Efficient {B}ayesian synthetic likelihood with whitening
  transformations.
\newblock arXiv:1909.04857.

\bibitem[\protect\citeauthoryear{Sherlock, Thiery, Roberts, and
  Rosenthal}{Sherlock et~al.}{2015}]{sherlock2015}
Sherlock, C., A.~H. Thiery, G.~O. Roberts, and J.~S. Rosenthal (2015, 02).
\newblock On the efficiency of pseudo-marginal random walk metropolis
  algorithms.
\newblock {\em Ann. Statist.\/}~{\em 43\/}(1), 238--275.

\bibitem[\protect\citeauthoryear{Sisson, Fan, and Beaumont}{Sisson
  et~al.}{2018}]{sisson2018}
Sisson, S.~A., Y.~Fan, and M.~Beaumont (2018).
\newblock {\em Handbook of Approximate Bayesian Computation}.
\newblock Chapman and Hall/CRC.

\bibitem[\protect\citeauthoryear{Thomas, Dutta, Corander, Kaski, and
  Gutmann}{Thomas et~al.}{2021}]{Dutta2016}
Thomas, O., R.~Dutta, J.~Corander, S.~Kaski, and M.~U. Gutmann (2021).
\newblock Likelihood-free inference by ratio estimation.
\newblock {\em Bayesian Analysis\/}~(To Appear).

\bibitem[\protect\citeauthoryear{Tran, Kohn, Quiroz, and Villani}{Tran
  et~al.}{2016}]{tran2016block}
Tran, M.-N., R.~Kohn, M.~Quiroz, and M.~Villani (2016).
\newblock The block pseudo-marginal sampler.
\newblock {\em arXiv preprint arXiv:1603.02485\/}.

\bibitem[\protect\citeauthoryear{Warton}{Warton}{2008}]{warton08}
Warton, D.~I. (2008).
\newblock Penalized normal likelihood and ridge regularization of correlation
  and covariance matrices.
\newblock {\em Journal of the American Statistical Association\/}~{\em 103},
  340--349.

\bibitem[\protect\citeauthoryear{Wilkinson}{Wilkinson}{2014}]{wilkinson14}
Wilkinson, R. (2014).
\newblock Accelerating {ABC} methods using {G}aussian processes.
\newblock {\em Journal of Machine Learning Research\/}~{\em 33}, 1015--1023.

\bibitem[\protect\citeauthoryear{Wood}{Wood}{2010}]{wood10}
Wood, S.~N. (2010).
\newblock Statistical inference for noisy nonlinear ecological dynamic systems.
\newblock {\em Nature\/}~{\em 466}, 1102--1107.

\end{thebibliography}



\appendix
\section{Proofs and Lemmas}
\subsection{Proofs of the main results}
\begin{proof}[Proof of Theorem \ref{prop:bvm}]
We only prove the second result in Theorem \ref{prop:bvm}, the first result then follows by taking $\gamma=0$. Upper bound the integral in question as
\begin{flalign}
\int_{\mathcal{T}_n} \|t\|^{\gamma}|\widehat{\pi}(t|S_n)-N\{t;0,W_0\}|\dt t\le& \int_{\mathcal{T}_n} \|t\|^{\gamma}|{\pi}(t|S_n)-N\{t;0,W_0\}|\dt t\nonumber\\&+\int_{\mathcal{T}_n} \|t\|^\gamma |\widehat{\pi}(t|S_n)-{\pi}(t|S_n)|\dt t, \label{eq:decomp}
\end{flalign}and the stated result follows if both terms in \eqref{eq:decomp} are $o_p(1)$. The first term on the RHS of \eqref{eq:decomp} is $o_p(1)$ by Lemma \ref{prop:bvm1}; we now show that the second term is $o_p(1)$.

Define $M^{}_n(\theta):=[v_n^2{\Delta}_n(\theta)]^{-1}$, $Q_n(\theta):=-v_n^2\{b(\theta)-S_n\}^{\intercal}M_n(\theta)\{b(\theta)-S_n\}/2$, and $$\widehat Q_{n}(\theta):=-v_n^2\left\{\widehat b_{n}(\theta)-S_n\right\}^{\intercal}M_n(\theta)\left\{\widehat b_{n}(\theta)-S_n\right\}/2.$$ We first demonstrate that, uniformly over $\Theta$,
\begin{flalign}\label{eq:app1}
\mathbb{E}\left[\exp\left\{\widehat Q_{n}(\theta)\right\}\mid\theta,S_n\right]= \exp\left\{Q_n(\theta)\right\}\left\{1+O\left(\frac{k(\theta)}{m}\right)\right\}.
\end{flalign}
Using properties of quadratic forms, and Assumption \ref{ass:propO},
\begin{flalign*}
\mathbb{E}\left[\left\|M^{1/2}_n(\theta)v_n\left\{\widehat{b}_n(\theta)-S_n\right\}\right\|^2\mid\theta,S_n\right]=&\text{Tr}\left\{M_n(\theta)\cdot\text{Cov}\left[v_n\left\{\widehat{b}_n(\theta)-S_n\right\}\right]\right\}+\mu_n(\theta)^\intercal M_n(\theta) \mu_n(\theta)\\&\leq \text{Tr}[M_n(\theta)] k(\theta)/m+\mu_n(\theta)^\intercal M_n(\theta)\mu_n(\theta),
\end{flalign*}
where $$\mu_n(\theta)=v_n\mathbb{E}[S_n(z^i)-S_n|S_n,\theta]=v_n\{b(\theta)-S_n\}.$$ Apply Lemma 
\ref{lem:propO} with $A=M^{1/2}_n(\theta)$, $x=v_n\{\widehat{b}_n(\theta)-S_n\}$, and 
$M=M_n(\theta)$, which is valid for $\eta$ satisfying $$0\leq \eta<{1}\big{/}{\left[2\frac{k(\theta)}{m}\|M_n(\theta)\|\right]}.$$ However, by Assumption~\ref{ass:propO}(ii), for any $\theta\in\Theta$, $\|M_n(\theta)\|k(\theta)/m=o(1)$ as $n\rightarrow\infty$. Therefore,
for $n$ large enough and uniformly over $\Theta$ , we take $\eta = 1$, without loss of generality.  
Applying Lemma~\ref{lem:propO}, with $\eta=1$, yields
\begin{flalign}\label{eq:logeq}
\log\left\{\mathbb{E}[\exp(\|Ax\|^2)]\right\}&\leq \text{Tr}[M_n(\theta)]k(\theta)/m+\frac{\|M^{1/2}_n
(\theta)\mu_n(\theta)\|^2}{[1+o(1)]}+O\left(\frac{\text{Tr}[M_n(\theta)^2]k^2(\theta)/m^2}
{1+o(1)}\right).
\end{flalign} One half of the numerator of the second term in the above equation is equivalent to $$\mu_n(\theta)^\intercal M^{}_n(\theta)\mu_n(\theta)/2=v_n^2\{b(\theta)-S_n\}^\intercal M_n(\theta)\{b(\theta)-S_n\}/2= -Q_n(\theta).$$ Therefore, from equation \eqref{eq:logeq},
\begin{flalign*}
\mathbb{E}\left[\exp\{\widehat{Q}_n(\theta)\}\mid\theta,S_n\right]&=
\exp(-\|M_n^{1/2}(\theta)\mu_n(\theta)\|^2/2)\exp\left[O\left\{\text{Tr}\left[M_n(\theta)
\right]k(\theta)/m\right\}\right]\\&=\exp\left\{Q_n(\theta)\right\}O\left\{\text{Tr}
\left[M_n(\theta)\right]k(\theta)/m\right\}\\&\leq\exp\left\{Q_n(\theta)\right\}
\left\{1+O(k(\theta)/m)\right\}
\end{flalign*}

From equation  \eqref{eq:app1} and the definitions of $\widehat{g}_n(\theta|S_n)$ and $g_n(\theta|S_n)$, 
\begin{flalign}\label{eq:app3}
|\widehat{g}_n(S_n|\theta)-g_n(S_n|\theta)|\leq g_n(S_n|\theta)\left[O\left\{{k(\theta)}/{m}\right\}\right],
\end{flalign}so that
\begin{flalign*}
\left|\int_{\Theta} \widehat{g}_n(S_n|\theta)\pi(\theta)\dt \theta-\int_{\Theta} {g}_n(S_n|\theta)\pi(\theta)\dt \theta\right|&\leq \int_\Theta|\widehat{g}_n(S_n|\theta)-{g}_n(S_n|\theta)|\pi(\theta)\dt \theta\\&\lesssim \frac{1}{m}\int_{\Theta} k(\theta)g_n(S_n|\theta)\pi(\theta)\dt \theta\\&= \frac{1}{m}\int_{\Theta} g_n(S_n|\theta)\pi(\theta)\dt \theta\int_{\Theta} k(\theta)\pi(\theta|S_n)\dt \theta,
\end{flalign*}where the second line follows from equation \eqref{eq:app3}, and the equality from reorganizing terms.

The proof of Lemma \ref{prop:bvm1} demonstrates that $\int_\Theta g_n(S_n|\theta)\pi(\theta)\dt \theta<\infty$ for all $n$ large enough; hence,  
\begin{flalign}
\left|\int_{\Theta} \|\theta\|^{\gamma}\widehat{g}_n(S_n|\theta)\pi(\theta)\dt \theta-\int_{\Theta} \|\theta\|^{\gamma}{g}_n(S_n|\theta)\pi(\theta)\dt \theta\right|&\lesssim \frac{1}{m}\int_{\Theta}\|\theta\|^{\gamma}k(\theta)\pi(\theta|S_n)\dt \theta\nonumber\\&\lesssim \frac{1}{m}\int_{\Theta}\|\theta\|^{\xi}\pi(\theta|S_n)\dt \theta, \label{eq:neweq}
\end{flalign}
where $\xi=\gamma+\kappa$, with $\kappa$ as in Assumption~\ref{ass:propO}(ii).
Consider the term $\int_\Theta\|\theta\|^\xi\pi(\theta|S_n)\dt\theta$. Recall that $t:=v_nW_0^{}(\theta-\theta_0)-Z_n$, and we obtain
\begin{flalign*}
\|\theta\|^{\xi}&= \|W_0^{-1}t/v_n+\theta_0+W_0^{-1}Z_n/v_n\|^{\xi}\lesssim{v_n^{-\xi}}
\|t\|^{\xi}+\|\{b(\theta_0)-S_n\}+\theta_0\|^{\xi}.
\end{flalign*}
Applying the change of variables $\theta\mapsto t$ and the above inequality yields
\begin{flalign}
\int_{}\|\theta\|^{\xi}\pi(\theta|S_n)\dt \theta&\lesssim {v_n^{-\xi}}\int\|t\|^{\xi}\pi(t|S_n) \dt t+\|\{b(\theta_0)-S_n\}+\theta_0\|^{\xi}\label{eq:app4}.
\end{flalign}
Now, 
\begin{flalign*}
\int\|t\|^{\xi}\pi(t|S_n) \dt t&\leq \int\|t\|^{\xi}|\pi(t|S_n)-N\{t;0,W_0\}|\dt t+\int\|t\|^{\xi}N\{t;0,W_0\}\dt t
\end{flalign*}
The first term in the above equation is $o_p(1)$ by Lemma \ref{prop:bvm1} under Assumption \ref{ass:three} with $p\ge\xi$, and
the second term is finite due to Gaussianity; hence, 
\begin{flalign}
\int\|t\|^{\xi}\pi(t|S_n) \dt t&=o_p(1)+C\label{eq:app5}.
\end{flalign}
Using  equation \eqref{eq:app5} in equation \eqref{eq:app4}, and the fact that, by Assumption~\ref{ass:one}, $\|S_n- b(\theta_0)\|=o_p(1)$,
\begin{flalign}
\int_{\Theta}\|\theta\|^{\xi}\pi(\theta|S_n)\dt \theta&\leq C/v_n^{\xi}+o_p(1/v_n^{\xi})+{\|\theta_0+o_p(1)\|^{\xi}}{}.\label{eq:new2}
\end{flalign}Applying equation \eqref{eq:new2} into the RHS of equation \eqref{eq:neweq} then yields, 
\begin{flalign}
\left|\int_{\Theta}\|\theta\|^{\gamma} \widehat{g}_n(S_n|\theta)\pi(\theta)\dt \theta-\int_{\Theta}\|\theta\|^{\gamma} {g}_n(S_n|\theta)\pi(\theta)\dt \theta\right|&\lesssim \frac{1}{m}\int_{\Theta}\|\theta\|^{\xi}\pi(\theta|S_n)\dt \theta= O_p(1/m).\label{eq:result1}
\end{flalign}

It then follows from equation \eqref{eq:result1} that
\begin{flalign}\label{eq:result2}
\frac{\left|\int \widehat{g}_n(S_n|\theta)\pi(\theta)\dt \theta-\int {g}_n(S_n|\theta)\pi(\theta)\dt \theta\right|}{\int g_n(S_n|\theta)\pi(\theta)\dt\theta}= O_p(1/m),
\end{flalign} and so
\begin{flalign*}
\frac{\int_{\Theta} \widehat{g}_n(S_n|\theta)\pi(\theta)\dt \theta}{\int_{\Theta} g_n(S_n|\theta)\pi(\theta)\dt\theta}=1+O_p(1/m);\text{ }\frac{\int_{\Theta} g_n(S_n|\theta)\pi(\theta)\dt\theta}{\int_{\Theta} \widehat{g}_n(S_n|\theta)\pi(\theta)\dt \theta}=1+O_p(1/m).
\end{flalign*}
Write  $\{\widehat{\pi}_{}(\theta|S_n)-\pi_{}(\theta|S_n)\}$ as 
\begin{flalign*}
\{\widehat{\pi}_{}(\theta|S_n)-\pi_{}(\theta|S_n)\}=& \frac{\widehat{g}_n(S_n|\theta)\pi(\theta) }{\int_{\Theta} \widehat{g}_n(S_n|\theta)\pi(\theta)\dt \theta}-\frac{{g}_n(S_n|\theta)\pi(\theta)}{\int_{\Theta} {g}_n(S_n|\theta)\pi(\theta)\dt \theta} \\
=&\left\{ \widehat{g}_n(S_n|\theta)-{g}_n(S_n|\theta)\right\}\frac{\pi(\theta)}{\int_{\Theta} {g}_n(S_n|\theta)\pi(\theta)\dt \theta} \frac{\int_{\Theta} {g}_n(S_n|\theta)\pi(\theta)\dt \theta}{\int_{\Theta} \widehat{g}_n(S_n|\theta)\pi(\theta)\dt \theta} \\
&-g_{n}(S_n|\theta) \pi(\theta)\left(\frac{1}{\int_{\Theta} g_{n}(S_n|\theta) \pi(\theta)\dt\theta}-\frac{1}{\int_{\Theta} \widehat{g}_{n}(S_n|\theta) \pi(\theta)\dt\theta}\right),
\end{flalign*} and apply the triangle inequality to obtain
\begin{flalign*}
\left|\widehat{\pi}_{}(\theta|S_n)-\pi_{}(\theta|S_n)\right| &\leq\left|\widehat{g}_n(S_n|\theta)-{g}_n(S_n|\theta)\right| \frac{\pi(\theta)}{\int \widehat{g}_n(S_n|\theta)\pi(\theta)\dt\theta}\\&+\frac{\left|\int_{\Theta} \widehat{g}_n(S_n|\theta)\pi(\theta)\dt\theta-\int_{\Theta} {g}_n(S_n|\theta)\pi(\theta)\dt\theta\right|}{\int_{\Theta} \widehat{g}_n(S_n|\theta)\pi(\theta)\dt\theta} \pi_{}(\theta|S_n).
\end{flalign*} Multiplying by $\|\theta\|^\gamma$, integrating both sides and applying
 equations~\eqref{eq:result1} and \eqref{eq:result2}, 
$$
\begin{aligned}
\int_{\Theta}\|\theta\|^\gamma\left|\widehat{\pi}_{}(\theta|S_n)-\pi_{}(\theta|S_n)\right| \dt \theta & \leq \frac{1}{\int_{\Theta} \widehat{g}_n(S_n|\theta)\pi(\theta)\dt\theta} \int\|\theta\|^\xi\left|\widehat{g}_n(S_n|\theta)-{g}_n(S_n|\theta)\right|\pi(\theta) \dt\theta\\&+\frac{\left|\int_{\Theta}\widehat{g}_n(S_n|\theta)\pi(\theta)\dt\theta-\int_{\Theta}{g}_n(S_n|\theta)\pi(\theta)\dt\theta\right|}{\int_{\Theta}\widehat{g}_n(S_n|\theta)\pi(\theta)\dt\theta}\int_{\Theta}\|\theta\|^\xi\pi(\theta|S_n)\dt\theta \\
& \leq O_p\left (1/m\right )+\frac{\int_{\Theta}{g}_n(S_n|\theta)\pi(\theta)\dt\theta}{\int_{\Theta}\widehat{g}_n(S_n|\theta)\pi(\theta)\dt\theta} O_p\left (1/m\right )\int_{\Theta}\|\theta\|^\xi\pi(\theta|S_n)\dt\theta \\
&=O_p\left(1/m\right).
\end{aligned}
$$
By equation \eqref{eq:new2}, $\int_{\Theta} \|\theta\|^\xi \pi(\theta|S_n)<\infty$, and 
 the first term in the second inequality is $O_p(1/m)$; 
 the second term is also $O_p(1/m)$ because 
   ${\int_{\Theta}{g}_n(S_n|\theta)\pi(\theta)\dt\theta}/
 {\int_{\Theta}\widehat{g}_n(S_n|\theta)\pi(\theta)\dt\theta}=1+O_p(1/m)$.
The stated result then follows.

\end{proof}

\begin{proof}[Proof of Corollary \ref{cor:one}]
The proof follows from Theorem~\ref{prop:bvm}. First, decompose $\bar{\theta}_n$ as
\begin{flalign*}
\bar\theta_n&=\int \theta \widehat{\pi}(\theta|S_n)\dt\theta=\int \theta\left\{ \widehat{\pi}(\theta|S_n)-\pi(\theta|S_n)\right\}\dt\theta+\int\theta\pi(\theta|S_n)\dt\theta ; 
\end{flalign*} 
 by the result of Theorem~\ref{prop:bvm1},
$$
\int \theta\left\{ \widehat{\pi}(\theta|S_n)-\pi(\theta|S_n)\right\}\le \int \|\theta\|| \widehat{\pi}(\theta|S_n)-\pi(\theta|S_n)|\dt\theta=O_p(1/[v_nm])
$$
so that
\begin{flalign*}
\overline\theta_n&=O_p(1/[v_nm])+\int\theta\pi(\theta|S_n)\dt\theta .
\end{flalign*}
Changing variables $\theta\mapsto t$ yields
	\begin{flalign*}
	\int_{\Theta} \theta\pi(\theta|S_n)\dt \theta=\int_{\mathcal{T}_n} \left(\theta_0+W_0^{-1}Z_n/v_n+W_0^{-1}t/{v_n}\right)\pi(t|S_n)\dt t; 
	\end{flalign*}
hence
	\begin{flalign*}
	W_0v_n(\bar{\theta}_n-\theta_0)-Z_n&=\int t \pi(t|S_n)\dt t+O_p(1/m)\\&= \int t \left[\pi(t|S_n)-N\{t;0,W_0\}\right]\dt t+\int t N\{t;0,W_0\}\dt t +O_p(1/m). \label{eq:post_mean1}
	\end{flalign*}
The second term on the right is zero. Therefore,	
	\begin{flalign*}
	\left|W_0v_n(\bar{\theta}_n-\theta_0)-Z_n\right|&=\left| \int t \left[\pi(t|S_n)-N\{t;0,W_0\}\right]\dt t\right| +O_p(1/m)\\&\leq \int \|t\|\left|\pi(t|S_n)-N\{t;0,W_0\}\right|\dt t+O_p(1/m)\\&=o_p(1)+O_p(1/m),
	\end{flalign*}
where the last line follows from Lemma~\ref{prop:bvm1}. 	
Recall the definition $Z_n=\nabla b(\theta_0)^{\intercal}\Delta^{-1}(\theta_0)\{b(\theta_0)-S_n\}$; under Assumption~\ref{ass:one},
	\begin{equation*}
	Z_n\Rightarrow N\left\{0, \nabla b(\theta_0)^{\intercal}\Delta^{-1}(\theta_0)V_0\Delta^{-1}(\theta_0)\nabla b(\theta_0)\right\}, 
	\end{equation*}
and the result follows.

\end{proof}

\begin{proof}[Proof of Theorem \ref{thm:acc}]
We first show that the result is satisfied if $\widehat{g}_n(S_n|\theta)$ in $\widetilde\alpha_n$ is replaced with the idealized counterpart ${g}_n(S_n|\theta)$,
 yielding the acceptance rate $$\alpha_n\asymp\frac{1}{v_n^{d_\theta}}\int q_n(\theta)g_n(S_n|\theta)\dt\theta.$$From the posterior concentration of $\pi(\theta|S_n)$ in Lemma \ref{prop:bvm1} and the restrictions on the proposal in Assumption \ref{ass:prop}, the acceptance probability $\alpha_n$ can be rewritten as
$$
\alpha_n\asymp \int_{}\mathbb{I}\left[\|\theta-\theta_0\|\leq\delta_n\right]q_n(\theta)g_n(S_n|\theta)/v_n^{d_\theta}\dt \theta+o_p(1), $$
for some $\delta_n=o(1)$ with $v_n\delta_n\rightarrow\infty$.

Following arguments mirroring those in the proof of Lemma \ref{prop:bvm1}, for any $\delta_n=o(1)$, on the set $\{\theta\in\Theta:\|t(\theta)\|\leq \delta_n v_n\}$, and disregarding $o(1)$ terms,
\begin{flalign*}
g_n\left(S_ { n }|\theta\right)/v_n^{d_\theta}\asymp \exp\left\{-t^{\intercal}(\theta)W_0^{-1}t(\theta)/2\right\},
\end{flalign*}
where $t(\theta):=W_0v_n\{\theta-\theta_0\}-Z_n$ (see the proof of Lemma \ref{prop:bvm1} for details). By construction, $t(\theta)$ is a one-to-one transformation of $\theta$ for fixed $\theta_0$ and $Z_n$. From the definition of the proposal, we can restrict $\theta$ to the set $$\{\theta\in\Theta:\|v_n(\theta-\theta_0)\|\leq\delta_nv_n\}\cap\{\theta\in\Theta:\theta=\mu_n+\sigma_n X\},$$with $\mathbb{E}[X]=0,\;\mathbb{E}[\|X\|^2]<\infty$. On this set, up to negligible terms,
\begin{equation}\label{eq:bsl_like}
g_n\left(S_ { n }|\theta\right)/v_n^{d_\theta}\asymp  \exp\left[-\left\{W_0v_n(\theta-\theta_0)-Z_n\right\}^{\intercal}W^{-1}_{0}\left\{W_0v_n(\theta-\theta_0)-Z_n\right\}/2\right].
\end{equation}
Define $r_n:=v_n\sigma_n$, $c_n^\mu:=\sigma^{-1}_n(\mu_n-S_n)$, and apply equation \eqref{eq:bsl_like} along with the change of variables $\theta\mapsto t=W_0v_n(\theta-\theta_0)$ to obtain
\begin{flalign*}
\alpha_n&\asymp\int_{}\mathbb{I}\left[\|v_n(\theta-\theta_0)\|\leq v_n\delta_n\right]q_n(\theta)g_n(S_n|\theta)/v_n^{d_\theta}\dt\theta\\&\asymp\int_{\|t\|\leq v_n\delta_n}r_n^{-1}{q\left(t/r_n-c_n^{\mu}\right) }{}\exp\left[-\left\{t-Z_n\right\}^{\intercal}M(\theta_0)\left\{t-Z_n\right\}/2\right]\dt t,
\end{flalign*}where the second equality makes use of the location-scale nature of the proposal. For $\delta_n v_n\rightarrow\infty$, $T_\delta:=\{t:\|t\|\leq\delta_n v_n\}\rightarrow\mathbb{R}^{d_\theta}$. Define $x(t):=t/r_{n}-c_{n}^{\mu}$ and the set $x(A):=\{x:x=x(t)\text{ for some }t\in A\}$. Then, by construction, $x(T_\delta)$ also converges to $\mathbb{R}^{d_\theta}$. Applying the change of variable $t\mapsto x$  yields
\begin{flalign}
\alpha_n&\asymp\int_{x(T_\delta)}r_n^{-1}q\left(x\right) \exp\left\{-r_n^2\left(x+c_{n}^{\mu}-Z_n/r_n\right)^{\intercal}M_{0}\left(x+c_{n}^{\mu}-Z_n/r_n\right)/2\right\}\dt x.\label{eq:acc1}
\end{flalign}

Applying part (i) of Assumption \ref{ass:prop} then  yields $$\underline{\alpha}_n\leq \alpha_n\leq \widehat{\alpha}_n,$$ where
\begin{flalign*}
\widehat{\alpha}_n&=\frac{C}{r_n^{}}\frac{|M(\theta_0)|^{1/2}}{(2\pi)^{d_\theta/2}}\int_{x(T_\delta)} \exp\left\{-r^2_n\left(x+c_{n}^{\mu}-Z_n/r_n\right)^{\intercal}M_{0}\left(x+c_{n}^{\mu}-Z_n/r_n\right)/2\right\}\dt x,\\
\underline{\alpha}_n&=\frac{\exp\left(-Z_n^\intercal{M}_0Z_n/2\right)}{{r_n^{}}}\frac{|M(\theta_0)|^{1/2}}{(2\pi)^{d_\theta/2}}\int_{x(T_\delta)}q(x) \exp\left\{-r_n^2\left(x+c_{n}^{\mu}\right)^{\intercal}M_{0}\left(x+c_{n}^{\mu}\right)/2\right\}\dt x.
\end{flalign*}
By part (ii) of Assumption \ref{ass:prop}, $r_n\rightarrow c_\sigma>0$;  by Assumption 1, $Z_n/r_n=O_p(1).$ Therefore, for $Z$ denoting a random variable whose distribution is the same as the limiting distribution of $Z_n$, by the dominated convergence theorem and  part (iii) of Assumption \ref{ass:prop}
\begin{flalign*}
\widehat{\alpha}_n&\rightarrow\frac{C}{r_0}\frac{|M(\theta_0)|^{1/2}}{(2\pi)^{d_\theta/2}}\int_{\mathbb{R}^d} \exp\left\{-r_0\left(x+c^{\mu}\right)^{\intercal}M_{0}\left(x+c_{}^{\mu}\right)/2\right\}\dt x,\\
\underline{\alpha}_n&\rightarrow\frac{\exp\left(-Z^\intercal M(\theta_0)Z/2\right)}{r_0}\frac{|M(\theta_0)|^{1/2}}{(2\pi)^{d_\theta/2}}\int_{\mathbb{R}^{d_\theta}}q(x) \exp\left\{-r_0^2\left(x+c^{\mu}\right)^{\intercal}M_{0}\left(x+c_{}^{\mu}\right)/2\right\}\dt x,
\end{flalign*}
in distribution as $n\rightarrow\infty$, where $c^\mu$ denotes a random variable whose distribution is the same as the limiting distribution of $\sigma^{-1}_n(\mu_n-S_n)$ and $r_0=\lim_n r_n$. 
By part (ii) of Assumption~\ref{ass:prop}, $c^\mu$ is finite except on sets of measure zero, ensuring that $\widehat{\alpha}_n=\Xi_p(1)$ and $\underline{\alpha}_n=\Xi_p(1).$ 
We have  $\alpha_n=\Xi_n(1)$
because  the above limits are $\Xi_p(1)$.

To deduce the stated result, we first bound $|\widetilde{\alpha}_n-\alpha_n|$ as
\begin{flalign*}
\left|\int q_n(\theta)\widehat{g}_n(S_n|\theta)\dt\theta-\int q_n(\theta){g}_n(S_n|\theta)\dt\theta\right|&\leq\int q_n(\theta)\left\{k(\theta)/m\right\}g_n(S_n|\theta)\dt\theta\\&={m}^{-1}\int\pi(\theta)g_n(S_n|\theta)\dt\theta\int\frac{q_n(\theta)}{\pi(\theta)}k(\theta)\pi(\theta|S_n)\dt\theta,
\end{flalign*}where the first inequality follows from  equation \eqref{eq:app3} in the proof of Theorem \ref{prop:bvm}, and the equality follows from reorganizing terms. Define $h_n(\theta):={q_n(\theta)}/{\pi(\theta)}$ and obtain
\begin{flalign*}
\int \frac{q_n(\theta)}{\pi(\theta)}k(\theta)\pi(\theta|S_n)\dt\theta=\int h_n(\theta)k(\theta)\pi(\theta|S_n)\dt\theta&\leq \left[\int h_n^2(\theta)\pi(\theta|S_n)\dt\theta\right]^{1/2}\left[\int k^2(\theta)\pi(\theta|S_n)\dt\theta\right]^{1/2}\\&\leq O_p(1)\left[\int \|\theta\|^{2\kappa}\pi(\theta|S_n)\dt\theta\right]^{1/2}\\&\leq O_p(1),
\end{flalign*}where the first inequality follows from Cauchy-Schwartz, the second from Assumption \ref{ass:prop} part (iv) and Assumption \ref{ass:propO} part (ii), while $\int \|\theta\|^{2\kappa}\pi(\theta|S_n)\dt\theta<\infty$ by hypothesis. Consequently,
\begin{flalign*}
|\widetilde\alpha_n-\alpha_n|=\left|\int q_n(\theta)\widehat{g}_n(S_n|\theta)\dt\theta-\int q_n(\theta){g}_n(S_n|\theta)\dt\theta\right|=O_p(1/m)
\end{flalign*}and the stated result follows from the behavior of $\alpha_n$ obtained in the first part of the result.
\end{proof}

\subsection{Lemmas\label{SS: lemmas}}
This section contains several lemmas used to prove the main results. The first lemma draws on elements from \cite{lehmann2006theory} and \cite{chernozhukov+h03} to demonstrate that the exact BSL posterior is asymptotically normal. We note that the simulated nature of the BSL likelihood implies that the above results are not directly applicable in our context.

\begin{lemma}\label{prop:bvm1}
	Recall $t:=W_0v_n(\theta-\theta_0)-Z_n;$ if Assumptions \ref{ass:one}-\ref{ass:two} are satisfied, and if Assumption \ref{ass:three} is satisfied with $p\ge\gamma\ge0$, then
	$$
	\int\|t\|^\gamma\left|\pi(t|S_n)-N\{t;0,W_0\}\right|\dt t=o_{p}(1).
	$$
\end{lemma}

\begin{proof}[Proof of Lemma \ref{prop:bvm1}]Recall the following definitions used in the proof of Theorem~\ref{prop:bvm}: $M_n(\theta):=\left[v_n^2{\Delta}^{}_n(\theta)\right]^{-1}$,  $M(\theta):=\Delta(\theta)^{-1}$, and $Q_n(\theta):=-v_n^2\{b(\theta)-S_n\}^\intercal M_n(\theta)\{b(\theta)-S_n\}/2.$ For an appropriately defined remainder term $R_n(\theta)$, consider the identity
\begin{flalign}
Q_n(\theta)-Q_n(\theta_{0})&=v_n^2\left\{b(\theta_0)-S_n\right\}^{\intercal}M_n(\theta_0)\nabla b(\theta_0)^{\intercal}(\theta-\theta_0)\nonumber\\&-\frac{v_n^2}{2}(\theta-\theta_0)\nabla b(\theta_0)^{\intercal}M_n(\theta_0)\nabla b(\theta_0)(\theta-\theta_0)+R_n(\theta)\nonumber
\\&=-\frac{1}{2}t^{\intercal}W_0^{-1}t+\frac{1}{2}Z_n^{\intercal}W_0^{-1}Z_n+R_n(\theta).\label{eq:new3}
\end{flalign}
To simplify notation, let $T_n:=\theta_0+W_0^{-1}Z_n/v_n$, $t_w:=W_0^{-1}t$, and define
\begin{flalign*}
\omega(t)&:= Q_n\left(T_n+t_w/v_n\right)-Q_n(\theta_0)-\frac{1}{2}Z_n^{\intercal}W_0^{-1}Z_n.
\end{flalign*}	
Applying \eqref{eq:new3}, we see that 
$$
\omega(t)=-\frac{1}{2}t^{\intercal}W_0^{-1}t+R_n(T_n+t_w/v_n).$$
Then, for $\mathcal{T}_n:=\{W_0v_n(\theta-\theta_0)-Z_n:\theta\in\Theta\}$,
\begin{flalign*}
\pi(t|S_n)&=\frac{\left|M_n\left(T_n+t_w/v_n\right)\right|^{1/2}\exp\left\{Q_n\left(T_n+t_w/v_n\right)\right\}\pi\left(T_n+t_w/v_n\right)}{\int_{\mathcal{T}_n} \left|M_n\left(T_n+t_w/v_n\right)\right|^{1/2}\left\{Q_n\left(T_n+t_w/v_n\right)\right\}\pi\left(T_n+t_w/v_n\right)\dt t}\\&=\frac{\left|M_n\left(T_n+t_w/v_n\right)\right|^{1/2}\exp\left\{Q_n\left(T_n+t_w/v_n\right)-Q_n(\theta_0)-\frac{1}{2}Z_n^{\intercal}W_0^{-1}Z_n\right\}\pi\left(T_n+t_w/v_n\right)}{\int_{\mathcal{T}_n} \left|M_n\left(T_n+t_w/v_n\right)\right|^{1/2}\left\{Q_n\left(T_n+t_w/v_n\right)-Q_n(\theta_0)-\frac{1}{2}Z_n^{\intercal}W_0^{-1}Z_n\right\}\pi\left(T_n+t_w/v_n\right)\dt t}\\&={\left|M_n\left(T_n+t_w/v_n\right)\right|^{1/2}\exp\left\{ w(t) \right\}\pi\left(T_n+t_w/v_n\right)}/{C_n},
\end{flalign*}where
$$
C_n=\int_{\mathcal{T}_n} |M_n\left(T_n+t_w/v_n\right)|^{1/2}\exp\left\{ \omega(t) \right\}\pi\left(T_n+t_w/v_n\right)\dt t.
$$Throughout the rest of the proof, unless otherwise specified, integrals are calculated over ${\mathcal{T}_n}$.

The stated result follows if
\begin{flalign*}
\int \|t\|^{\gamma}\left|\pi(t|S_n)-N\{t;0,W_0\}\right|\dt t&=C_n^{-1}J_n=o_p(1), 
\end{flalign*}
where
\begin{flalign*}
J_{n}&=\int\|t\|^{\gamma}\bigg{|}\left|M_n\left(T_n+\frac{t_w}{v_n}\right)\right|^{1/2}\exp\left\{\omega(t)\right\} \pi_{}\left(T_n+\frac{t_w}{v_n}\right)-\left|M(\theta_0)\right|^{1/2 } \exp \left\{-\frac{1}{2} t^{\intercal}W^{-1}_0 t\right\}C_{n}^{}\bigg{|} \dt t.
\end{flalign*}
However, $$J_{n}\leq J_{1n}+J_{2n},$$ where
\begin{flalign*}
J_{1n}&:= \int\|t\|^{\gamma}\left|\left|M_n\left(T_n+\frac{t_w}{v_n}\right)\right|^{\frac{1}{2}}\exp \left\{{\omega_{}(t)}{}\right\} \pi_{}\left(T_n+\frac{t_w}{v_n}\right)-|M(\theta_0)|^{\frac{1}{2}}\exp \left\{-\frac{1}{2} t^{\intercal} W_0^{-1}t\right\} \pi_{}\left(\theta_0\right)\right| \dt t\\
J_{2n}&:=\left|C_{n}-\pi(\theta_0)\right|\int \|t\|^{\gamma}|M(\theta_0)|^{1/2}\exp \left\{-\frac{1}{2} t^{\intercal} W_0^{-1} t\right\} \dt t .
\end{flalign*}
Therefore, if $J_{1n}=o_{p}(1)$ the result follows since, taking $\gamma=0$, $J_{1n}=o_{p}(1)$ implies that
\begin{flalign*}
\left|C_n-\pi(\theta_0) \right|&=\bigg{|}\int \left|M_n\left(T_n+\frac{t_w}{v_n}\right)\right|^{1/2}\exp\left\{\omega(t)\right\} \pi_{}\left(T_n+\frac{t_w}{v_n}\right)
\dt t\\&-\pi(\theta_0)\int |M(\theta_0)|^{1/2}\exp \left\{-\frac{1}{2} t^{\intercal} W_0^{-1} t\right\} \dt t\bigg{|}\\&=o_{p}(1),
\end{flalign*}which implies that $J_{2n}=o_{p}(1)$.

To demonstrate that $J_{1n}=o_p(1)$, we split $\mathcal{T}_n$ into three regions. For some $0<h<\infty$ and $\delta>0$, with $\delta=o(1)$: region 1: $ \|t\|\leq h$; region 2: $  h<\|t\|\leq \delta v_n$;  region 3: $  \|t\|\geq \delta v_n$.

\medskip

\noindent\textbf{\textbf{Region 1}:} Over this region the result follows if
$$
\|t\|^\gamma \left|\left|M_n\left(T_n+\frac{t_w}{v_n}\right)\right|^{1/2}\exp\left\{\omega(t)\right\} \pi_{}\left(T_n+\frac{t_w}{v_n}\right)-\pi(\theta_0) |M(\theta_0)|^{1/2}\exp \left\{-\frac{1}{2} t^{\intercal} W_0^{-1} t\right\}\right|=o_p(1).
$$
Note that,
\begin{flalign*}
\quad \sup_{\|t\|\leq h} \left\|M_n\left(T_n+\frac{t_w}{v_n}\right)-M(\theta_0)\right\|=o_{p}(1),\text{ and }\sup_{\|t\|\leq h}&\left|\pi\left(T_n+\frac{t_w}{v_n}\right)-\pi(\theta_0)\right|=o_{p}(1),
\end{flalign*}
where the first equation follows from Assumptions \ref{ass:two} and  \ref{ass:three}, and because $$T_n=\theta_0+W_0^{-1}Z_n/v_n=\theta_0+o_{p}(1),$$ since $Z_n=O_p(1)$ by Assumption \ref{ass:one}. Likewise, by Assumption \ref{ass:one},
$$
\sup_{\|t\|\le h}\left\|T_n+t_w/v_n-\theta_0\right\|=O_p(1/v_n)
$$
so that by the first part of Lemma \ref{lem:remain},
$$
\sup_{\|t\|\le h}|R_n(T_n+t_w/v_n)|=o_p(1).
$$
Hence, $J_{1n}=o_{p}(1)$ from these equivalences and the dominated convergence theorem.

\bigskip

\noindent\textbf{\textbf{Region 2}:}
For $\delta=o(1)$ and small enough, by Assumption 3,  $\sup_{h\le\|t\|\le \delta v_n}\|M_n\left(T_n+t_w/v_n\right)-M(\theta_0)\|^{}=o_{p}(1)$.
For $h$ large enough and $\delta=o(1)$,  we have the bound  ${J}_{1n}\leq C_{1n}+C_{2n}+C_{3n}$, where
\begin{flalign*}
C_{1n}:=&C\int_{h\leq \|t\| \leq \delta v_{n}}\|t\|^{\gamma}\exp(-t^{\intercal}W_0^{-1}t/2)\sup _{\|t\| \leq h}\left|\exp\left\{|R_n(T_n+t_w/v_n)|\right\}\left\{ \pi_{}\left(T_n+{t_w}/{v_n}\right)-\pi_{}\left(\theta_0\right)\right\}\right|\dt t \\
C_{2n}:=&C\int_{h\leq \|t\| \leq \delta v_{n}}\|t\|^{\gamma} \exp(-t^{\intercal}W_0^{-1}t/2)\exp\left\{|R_n(T_n+t_w/v_n)|\right\} \pi_{}\left(T_n+{t_w}/{v_n}\right) \dt t \\C_{3n}:=&C\pi_{}\left(\theta_0\right) \int_{h\leq \|t\| \leq \delta v_{n}}\|t\|^{\gamma}\exp(-t^{\intercal}W_0^{-1}t/2)\dt t .
\end{flalign*}

The first term  $C_{1n}=o_{p}(1)$ for any fixed $h$, so that $C_{1n}=o_{p}(1)$ for $h\rightarrow\infty$, by the dominated convergence theorem. For  $C_{3n}$, we have that for any $0\le\gamma\le2$ there exists some $h'$ large enough such that for all $h>h'$, and $\|t\|\ge h$ $$\|t\|^{\gamma}\exp\left(-t^{\intercal}M_{0}t\right)=O(1/h).$$ Hence, $C_{3n}$ can be made arbitrarily small by taking $h$ large and $\delta$ small enough.

The result follows if $C_{2n}=o_p(1)$. We show that, for some $C>0$, and all $h\le\|t\|\le\delta v_n$,  with probability converging to one (wpc1),
\begin{equation}\label{eq:bound1}
\exp(-t^{\intercal}W_0^{-1}t/2)\exp\left\{|R_n(T_n+t_w/v_n)|\right\}\pi(T_n+t_w/v_n)\le C\exp\left\{-t^{\intercal}W_0^{-1}t/4\right\}.
\end{equation}
If equation \eqref{eq:bound1} is satisfied, then $C_{2n}$ is bounded above by
\begin{flalign*}
C_{2n}\le &C\int_{h\le\|t\|\le\delta v_n}\|t\|^{\gamma}\exp\left\{-t^{\intercal}W_0^{-1}t/4\right\}\dt t,
\end{flalign*}which, again can be made arbitrarily small for some $h$ large and $\delta$ small. To demonstrate equation \eqref{eq:bound1}, first note that by continuity of $\pi(\theta)$, Assumption \ref{ass:three}, $\pi(T_n+t_w/v_n)$ is bounded over $\{t:h\le \|t\|\le\delta v_n\}$ so that it may be dropped from the analysis. Now, since $\|T_n-\theta_0\|=o_p(1)$, for any $\delta>0$, $\|T_n+t_w/v_n-\theta_0\|<2\delta$ for all $\|t_w\|\le\delta v_n$ and $n$ large enough. Therefore, by Lemma \ref{lem:remain}, there exists some $\delta>0$ and $h$ large enough so that (wpc1)
$$
\sup_{h\le\|t\|\le\delta v_n}|R_n(T_n+t_w/v_n)|\le \frac{1}{4}\|t-Z_n\|^2\lambda_{\text{min}}\left\{W_0\right\}.
$$Since $Z_n=O_p(1)$, we have $Z_n^{\intercal}W_0^{-1}Z_n\le C\|Z_n\|^2=O_p(1)$, so that, for some $C>0$, wpc1,
$$
\exp\{\omega(t)\}\leq \exp\{-t^{\intercal}W_{0}^{-1}t+|R_n(T_n+t_w/v_n)|\}\leq C\exp\left(-t^{\intercal}W_0^{-1}t/4\right),
$$and the result follows.

\medskip

\noindent\textbf{\textbf{Region 3}:} For $\delta v_n$ large,
$$
\int_{\|t\|\ge \delta v_n}\|t\|^{\gamma}N\{t;0,W_0\}\dt t$$ can be made arbitrarily small and is therefore dropped from the analysis. Consider
\begin{align*}
\tilde{J}_{1n}& :=\int_{\|t\|\ge \delta v_n}\|t\|^{\gamma} \left| M_{n}\left(t / v_{n}+S_{n}\right)\right|^{1/2} \exp\{\omega(t) \} \pi\left(t / v_{n}+S_{n}\right)\dt t , \\
& = v_{n}^{d_\theta+\gamma}\int_{\|\theta-T_n\|\ge \delta }\|\theta-T_n\|^{\gamma}|M_n(\theta)|^{1/2}\exp\left\{Q_n(\theta)\right\} \pi\left(\theta\right)\dt \theta,
\end{align*}
by using the change of variables $\theta=T_n+t_w/v_n$. Now,
\begin{flalign*}
\tilde{J}_{1n}& =\exp\left\{Q_n(\theta_0)\right\}v_{n}^{d_\theta+\gamma}\int_{\|\theta-T_n\|\ge \delta }\|\theta-T_n\|^{\gamma}|M_n(\theta)|^{1/2}\exp\left\{Q_n(\theta)-Q_n(\theta_0)\right\} \pi\left(\theta\right)\dt \theta,
\end{flalign*}
and note that $\exp\{Q_n(\theta_0)\}=O_{p}(1)$ because $Q_n(\theta_0)=O_{p}(1)$ by Assumptions \ref{ass:one} and \ref{ass:two}. 

Define $Q(\theta):=-\{b(\theta)-b(\theta_0)\}^{\intercal}M(\theta)\{b(\theta)-b(\theta_0)\}/2$ and note that $Q(\theta_0)=0$ by virtue of Assumption \ref{ass:four}(i) and positive-definiteness of $\Delta(\theta_0)$ (Assumption \ref{ass:two}(ii)). For any $\delta>0$,
\begin{flalign*}
\sup_{\|\theta-\theta_0\|\ge \delta}\frac{1}{v^2_n}\left\{Q_n(\theta)-Q_n(\theta_0)\right\}\leq& \sup_{\|\theta-\theta_0\|\ge \delta}2|v_n^{-2}Q_n(\theta)-Q(\theta)|+\sup_{\|\theta-\theta_0\|\ge \delta}\left\{Q(\theta)-Q(\theta_0)\right\}.
\end{flalign*}
From Assumptions~\ref{ass:one} and \ref{ass:two}, the first term converges to zero in probability. From Assumption~\ref{ass:two}(iii), for any $\delta>0$ there exists an $\epsilon>0$ such that
$$
\sup_{\|\theta-\theta_0\|\ge \delta}\left\{Q(\theta)-Q(\theta_0)\right\}\le -\epsilon.
$$
Hence, 
\begin{equation}
\label{eq:expconv}
\lim_{n\rightarrow\infty}P^{(n)}_0\left[\sup_{\|\theta-\theta_0\|\geq \delta}\exp\left\{Q_n(\theta)-Q_n(\theta_0)\right\}\leq \exp(-\epsilon v_n^2)\right]=1.
\end{equation}
Use $T_n=\theta_0+O_p(1/v_n)$, the definition $M_n(\theta)=v_n^2\Delta_n(\theta)$, and equation \eqref{eq:expconv} to  obtain
\begin{align*}
\tilde{J}_{1n} & = \{1+o_p(1)\}\exp\left\{Q_n(\theta_0)\right\}v_{n}^{d_\theta+\gamma}\int_{\|\theta-\theta_{0}\|\ge \delta }|v^2_n\Delta_n(\theta)|^{-1/2}\|\theta-\theta_0\|^{\gamma}\pi\left(\theta\right)\exp\{Q_n(\theta)-Q_n(\theta_0)\}\dt \theta\\&\leq O_p(1) \exp\left(-\epsilon v_n^2\right)v_{n}^{d_\theta+\gamma}\int_{\|\theta-\theta_0\|\ge \delta }|v^2_n\Delta_n(\theta)|^{-1/2}\|\theta-\theta_0\|^{\gamma}\pi\left(\theta\right)\dt \theta\\&\leq  O_p\left\{\exp\left(-\epsilon v_n^2\right)v_{n}^{d_\theta+\gamma}\right\}
\\&=o_p(1); 
\end{align*}where 
the third inequality follows from the moment hypothesis in Assumption~\ref{ass:three}.
\end{proof}

The following result is a consequence of Proposition 1 in \cite{chernozhukov+h03}.
\begin{lemma}\label{lem:remain} Under Assumptions \ref{ass:one}-3, and for $R_n(\theta)$ as defined in the proof of Lemma \ref{prop:bvm1}, for each $\epsilon>0$ there exists a sufficiently small $\delta>0$ and $h>0$ large enough, such that
$$
\limsup_{n\rightarrow\infty}\text{Pr}\left[\sup_{h/v_n\le \|\theta-\theta_0\|\le\delta}\frac{|R_n(\theta)|}{1+n\|\theta-\theta_0\|^2}>\epsilon\right]<\epsilon
$$	and
$$
\limsup_{n\rightarrow\infty}\text{Pr}\left[\sup_{ \|\theta-\theta_0\|\le h/v_n}{|R_n(\theta)|}>\epsilon\right]=0.
$$	
\end{lemma}
\begin{proof}
The result is a specific case of Proposition~1 in \cite{chernozhukov+h03}. Therefore, it is
only necessary to verify that their sufficient conditions are satisfied in our context.

Assumptions (i)-(iii) in their result follow
 directly from Assumptions \ref{ass:four} and \ref{ass:two}, and the normality of $v_n\{b(\theta_0)-S_n\}$ in 
 Assumption~\ref{ass:one}. Therefore, all that remains is to verify their Assumption (iv).

Defining ${g}_n(\theta)=b(\theta)-S_n$, their Assumption~(iv) is stated as follows: for any $\epsilon>0$, there is a $\delta>0$ such that
$$
\limsup_{n\rightarrow\infty}\text{Pr}\left\{\sup_{\|\theta-\theta'\|\le\delta}\frac{v_n\|
\{g_n(\theta)-g_n(\theta')\}-\{\mathbb{E}\left[g_n(\theta)\right]-\mathbb{E}
\left[g_n(\theta')\right]\}}{1+v_n\|\theta-\theta'\|}>\epsilon\right\}<\epsilon . 
$$
In our context, this condition is always satisfied:  for $g_n(\theta)=b(\theta)-S_n$, and all $n$
\begin{flalign*}
{\|\{g_n(\theta)-g_n(\theta')\}-\{\mathbb{E}\left[g_n(\theta)\right]-\mathbb{E}\left[g_n(\theta')\right]\}}\|&={\|\{b(\theta)-b(\theta')\}-\{[b(\theta)-b_0]-[b(\theta')-b_0]\}}\|\\&=0.
\end{flalign*}
\end{proof}

The following result is used in the proof of Theorem \ref{prop:bvm1} and is an intermediate result of Theorem 1 in \cite{hsu2012tail}.

\begin{lemma}[Theorem 1, \cite{hsu2012tail}]\label{lem:propO}
Suppose $x=(x_1,\dots,x_{d})^\intercal$ is a random vector such that for some $\mu\in\mathbb{R}^d$ and some $\sigma\ge 0$,
$$
\mathbb{E}\left[\exp \left\{\alpha^{\intercal}(x-\mu)\right\}\right] \leq \exp \left(\|{\alpha}\|^{2} \sigma^{2} / 2\right),
$$for all $\alpha\in\mathbb{R}^d$. For $M\in\mathbb{R}^{d\times d}$ a positive-definite and symmetric matrix such that $M:=A^{\intercal} A$, for  $0\leq \eta<1/(2\sigma^2\|M\|)$,
$$
\begin{aligned}
\mathbb{E}\left[\exp \left(\eta\|A x\|^{2}\right)\right] &
& \leq \exp \left\{\sigma^{2} \operatorname{tr}(M) \eta+\frac{\sigma^{4} \operatorname{tr}\left(M^{2}\right) \eta^{2}+\|A \mu\|^{2} \eta}{1-2 \sigma^{2}\|M\|\eta}\right\}
\end{aligned}.
$$

\end{lemma}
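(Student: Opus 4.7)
The plan is to proceed by a Gaussian decoupling argument, which reduces the sub-Gaussian quadratic MGF to a Gaussian quadratic MGF, and then to bound the latter in closed form. The starting point is the identity $\exp(\eta\|v\|^2)=\mathbb{E}_g\exp(\sqrt{2\eta}\, g^\intercal v)$, valid for any deterministic $v\in\mathbb{R}^d$ when $g\sim N(0,I_d)$ is independent of $v$. Applied with $v=Ax$ and combined with Fubini, this gives
\[
\mathbb{E}_x\exp(\eta\|Ax\|^2)=\mathbb{E}_g\,\mathbb{E}_x\exp\!\left(\sqrt{2\eta}\,(A^\intercal g)^\intercal x\right).
\]

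Next I would apply the hypothesis on $x$ to the inner expectation with $\alpha=\sqrt{2\eta}\,A^\intercal g$. Writing $\alpha^\intercal x=\alpha^\intercal(x-\mu)+\alpha^\intercal\mu$ and noting $\|\alpha\|^2=2\eta\,g^\intercal AA^\intercal g$, this yields, conditional on $g$,
\[
\mathbb{E}_x\exp(\sqrt{2\eta}\,g^\intercal Ax)\le \exp\!\left(\sqrt{2\eta}\,g^\intercal A\mu+\eta\sigma^2\,g^\intercal AA^\intercal g\right).
\]
Taking the Gaussian expectation over $g$ then invokes the standard Gaussian-quadratic MGF: for symmetric $B$ with $I-2B\succ 0$ and any vector $c$, $\mathbb{E}\exp(g^\intercal Bg+c^\intercal g)=\det(I-2B)^{-1/2}\exp\bigl(\tfrac{1}{2}c^\intercal(I-2B)^{-1}c\bigr)$. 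With $B=\eta\sigma^2 AA^\intercal$ and $c=\sqrt{2\eta}\,A\mu$, the positivity condition is $\eta<1/(2\sigma^2\lambda_{\max}(AA^\intercal))$, which is implied by the stated range of $\eta$ since $AA^\intercal$ and $M=A^\intercal A$ share nonzero eigenvalues (this is where the $\|M\|_\infty$ restriction enters, via the bound $\lambda_{\max}(M)\le\|M\|$ used by the paper). This reduces the problem to bounding a log-determinant and a quadratic form.

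The remaining work is then purely algebraic. Diagonalise $AA^\intercal$ with eigenvalues $\lambda_1,\dots,\lambda_d\ge 0$ (same as those of $M$) and use the elementary inequality $-\log(1-x)\le x+\tfrac{x^2/2}{1-x}$ for $x\in[0,1)$, applied with $x_i=2\eta\sigma^2\lambda_i$. Summing over $i$ and using $\sum_i\lambda_i=\operatorname{tr}(M)$, $\sum_i\lambda_i^2=\operatorname{tr}(M^2)$, and $\max_i\lambda_i\le\|M\|_\infty$ under the paper's convention, gives
\[
-\tfrac{1}{2}\log\det(I-2\eta\sigma^2 AA^\intercal)\le\eta\sigma^2\operatorname{tr}(M)+\frac{\eta^2\sigma^4\operatorname{tr}(M^2)}{1-2\eta\sigma^2\|M\|_\infty}.
\]
For the quadratic-form term, use the identity $A^\intercal(I-cAA^\intercal)^{-1}=(I-cM)^{-1}A^\intercal$ to rewrite $\mu^\intercal A^\intercal(I-2\eta\sigma^2 AA^\intercal)^{-1}A\mu=\mu^\intercal(I-2\eta\sigma^2 M)^{-1}M\mu$, and bound the spectrum of $(I-2\eta\sigma^2 M)^{-1}$ by $(1-2\eta\sigma^2\|M\|_\infty)^{-1}$ to obtain $\eta\|A\mu\|^2/(1-2\eta\sigma^2\|M\|_\infty)$. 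Combining the two bounds inside the exponent gives the claimed expression.

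The main obstacle is the log-determinant bound: the decoupling and Gaussian MGF steps are essentially mechanical, but obtaining the Bernstein-type form with $\operatorname{tr}(M)$ appearing linearly in $\eta$ and $\operatorname{tr}(M^2)$ appearing only in the fractional remainder requires the precise inequality $-\log(1-x)\le x+\tfrac{x^2/2}{1-x}$ rather than a naive Taylor expansion; a cruder bound would pollute the leading $\operatorname{tr}(M)\eta$ term with higher-order contributions and fail to match the stated form. Everything else is routine linear algebra and an application of Fubini.
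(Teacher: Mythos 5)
Your proof is correct, but it follows a genuinely different route from the paper's. The paper works \emph{backwards} through a tail bound: it establishes the decoupling inequality $\mathbb{E}[\exp(\lambda z^\intercal Ax)] \geq \exp(\lambda^2\varepsilon/2)\,\mathrm{Pr}[\|Ax\|^2>\varepsilon]$, bounds the left side by invoking Lemma 2.4 of Hsu, Kakade and Zhang (2012) for the Gaussian quadratic exponential, extracts a tail bound on $\mathrm{Pr}(\|Ax\|^2>\varepsilon)$, and then recovers the moment generating function by integrating $\int_0^\infty \eta e^{\eta\varepsilon}\,\mathrm{Pr}(\|Ax\|^2>\varepsilon)\,\dt\varepsilon$; this last step produces an extra prefactor $\eta/(\gamma/\sigma^2-\eta)$ that must be argued to be at most one. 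You instead use the decoupling identity in the forward direction with Fubini--Tonelli, apply the sub-Gaussian hypothesis conditionally on $g$, and then evaluate the Gaussian quadratic MGF exactly, bounding the log-determinant with $-\log(1-x)\le x+\tfrac{x^2/2}{1-x}$ and the quadratic form with the push-through identity. Your route is more direct and more self-contained: it avoids the tail-bound detour and its prefactor entirely, and it re-derives the content of Hsu et al.'s Lemma 2.4 rather than citing it, at the modest cost of having to carry out the determinant and quadratic-form algebra yourself. One caveat applies equally to both arguments: the positivity condition genuinely needed is $2\eta\sigma^2\lambda_{\max}(M)<1$, and the bound $\lambda_{\max}(M)\le \|M\|_\infty$ fails for the paper's stated entrywise norm $\|M\|_\infty=\max_{i,j}|M_{ij}|$ (e.g.\ the all-ones $2\times 2$ matrix); the intended norm, as in Hsu et al., is the spectral norm, under which your argument (and the paper's) is exactly right. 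You also implicitly correct what appears to be a typographical duplication of $\sigma^2$ in the denominator of the lemma as stated; your final expression matches the last display of the paper's own proof.
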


\end{document}